\newcommand{\bra}[1]{\left\langle{#1}\right\vert}
\newcommand{\ket}[1]{\left\vert{#1}\right\rangle}
\newcommand{\transf}[1]{\ensuremath{\mathscr{#1}}}
\newcommand{\tA}{\transf A}
\newcommand{\tC}{\transf C}
\newcommand{\tD}{\transf D}
\newcommand{\tE}{\transf E}
\newcommand{\tM}{\transf M}
\newcommand{\tU}{\transf U}
\newcommand{\tV}{\transf V}
\newcommand{\tI}{\transf I}
\newcommand{\tP}{\transf P}
\newcommand{\sys}[1]{\ensuremath{\mathrm{#1}}}
\newcommand{\rA}{\sys A}
\newcommand{\rB}{\sys B}
\newcommand{\rC}{\sys C}
\newcommand{\rF}{\sys F}
\newcommand{\rL}{\sys L}
\newcommand{\rM}{\sys M}
\newcommand{\rN}{\sys N}
\newcommand{\rK}{\sys K}
\newcommand{\rX}{\sys X}
\newcommand{\st}{\mathsf{St}}
\newcommand{\purst}{\mathsf{PurSt}}
\newcommand{\eff}{\mathsf{Eff}}
\newcommand{\tr}{\mathsf{Tr}}
\newcommand{\refset}{\mathsf{Ref}}
\newtheorem{theorem}{Theorem}[section]
\newtheorem{lemma}{Lemma}[section]
\newtheorem{proposition}{Proposition}[section]
\newtheorem{corollary}{Corollary}[section]
\newtheorem{definition}{Definition}[section]
\DeclarePairedDelimiter{\norma}{\lVert}{\rVert}
\DeclarePairedDelimiter{\dket}{|}{\rangle\!\rangle}
\DeclarePairedDelimiter{\dbra}{\langle\!\langle}{|}
\DeclarePairedDelimiter{\ceil}{\lceil}{\rceil}
\def\dsc#1#2{\bb{#1}|{#2}\kk}
\newcommand{\cH}{\mathcal{H}}
\newcommand{\cF}{\mathcal{F}}
\newcommand{\sT}{\mathsf{T}}
\newcommand{\QTyp}{\mathsf{F}}
\newcommand{\Typ}{\mathsf{T}}
\newcommand{\Rng}{\mathsf{Rng}}
\newcommand{\vi}{\mathrm {\mathbf i}}
\newcommand{\Tr}{\mathrm{Tr}}
\newcommand{\Span}{\mathsf{Span}}
\newcommand{\Supp}{\mathsf{Supp}}
\newcommand{\proj}[1]{|{#1}\rangle\langle{#1}|}
\def\>{\rangle}
\def\rank#1{\operatorname{rank}(#1)}
\def\<{\langle}
\def\Tr{\operatorname{Tr}}
\def\kk{\rangle\!\rangle}
\def\bb{\langle\!\langle}
\def\Lin#1{\mathcal{L}(#1)}
\begin{document}
\title{Shannon theory for quantum systems and beyond: information compression for fermions}

\author{Paolo \surname{Perinotti}}

\email{paolo.perinotti@unipv.it}

\affiliation{QUIT group, Physics Dept., Pavia University, and INFN Sezione di Pavia, via Bassi 6, 27100 Pavia, Italy}

\author{Alessandro \surname{Tosini}}

\email{alessandro.tosini@unipv.it}

\affiliation{QUIT group, Physics Dept., Pavia University, and INFN Sezione di Pavia, via Bassi 6, 27100 Pavia, Italy}

\author{Leonardo \surname{Vaglini}}

\email{leonardo.vaglini01@universitadipavia.it}

\affiliation{QUIT group, Physics Dept., Pavia University, and INFN Sezione di Pavia, via Bassi 6, 27100 Pavia, Italy}

\date{\today}
\begin{abstract}
  We address the task of compression of fermionic quantum
  information. Due to the parity superselection rule, differently from the case of 
  encoding of quantum information in qubit states, part of the
  information carried by fermionic systems is encoded in their delocalised
  correlations. As a consequence, reliability of a compression protocol must be 
  assessed in a way that necessarily accounts also for the preservation of correlations. 
  This implies that input/output fidelity is not a satisfactory figure of merit for fermionic
  compression schemes. We then discuss various aspects regarding the assessment of reliability
  of an encoding scheme, and show that entanglement fidelity in the fermionic case is capable of 
  evaluating the preservation of correlations, thus revealing itself
  strictly stronger than input/output fidelity, unlike the qubit case. We then introduce a fermionic
  version of the source coding theorem showing that, as in
  the quantum case, the von Neumann entropy is the minimal rate for
  which a fermionic compression scheme exists, that is reliable 
  according to the entanglement fidelity criterion.
\end{abstract}
\maketitle

\section{Introduction}
The task of \emph{data compression} addresses the primary question in
information theory as to how redundant is the information contained in a message
and to what extent the message can then be compressed.

In classical information theory this question is answered by the
source coding theorem \cite{6773024}, which establishes the
fundamental role of Shannon entropy in information theory and its
operational interpretation. The coding theorem recognizes the Shannon
entropy as the fundamental limit for the compression rate in the
i.i.d. setting. This means that if one compresses at a rate above the
Shannon entropy, then the compressed data can be recovered perfectly
in the asymptotic limit of infinitely long messages, while this is not
possible for compression rate below the Shannon entropy. As a result
the Shannon entropy, which can be intuitively thought of as the
uncertainty about the outcome of an experiment that we are going to
perform on a classical system, quantifies in a stringent way the
amount of ``non-redundant'' information that is encoded in the state
of the classical system, what one would definitely call
\emph{information content}.

In quantum information theory the Shannon entropy is replaced by the 
von Neumann entropy. In particular, the quantum source
coding theorem \cite{PhysRevA.51.2738} identifies von Neumann
entropy as the rate at which quantum compression can be reliably
achieved.  Consider a quantum information source described by a system
$\rA$ and density operator $\rho\in\st{(\rA)}$, with $\st{(\rA)}$ the
set of states of system $\rA$. The density operator describes the preparation of a state
$\sigma_i$ from any ensemble $\{p_i\sigma_i\}$, with probabilities $p_i$, such that 
$\sum_ip_i\sigma_i=\rho$. A quantum message of $N$ letters can
now be understood in terms of $N$ uses of the quantum source, that output a
sequence of $N$ states $p_{i_j}\sigma_{i_j}$, with $1\leq j\leq N$,
drawn independently. One instance of this preparation protocol thus produces $\sigma_{\vi}\coloneqq\bigotimes_{j=1}^N\sigma_{i_j}$, with probability $p_\vi\coloneqq\prod_jp_{i_j}$.
Each of the $N$ systems has density operator
$\rho$, and the density operator of the entire message is then given by
$\rho^{\otimes N}$. A compression scheme for messages from the above described source
consists of two steps. \emph{Encoding:} Alice encodes the system $\rA^{\otimes N}$
according to a compression map given by a channel
$\tE:\st(\rA^{\otimes N})\rightarrow \st(B)$, where $\rB$ is generally
a system with dimension $d_\rB(N)$ smaller then $\rA^{\otimes N}$. The
compression rate is defined as the asymptotic quantity
$R=\lim_{N\rightarrow\infty}\log_2d_{\rB}(N)/N$. Typically, one
estimates the ``size'' of the compressed message through the capacity of 
system $\rB$ given in terms of $\log_2d_\rB(N)$,
namely the number of qubits that are needed to simulate $\rB$. 
Alice then sends the system $\rB$ to Bob using $NR$
noiseless qubit channels.  \emph{Decoding:} Finally, Bob applies
a decompression map $\tD:\st(B)\rightarrow \st(\rA^{\otimes N})$ 
to the message encoded in system $\rB$, with the purpose of recovering 
the original message as reliably as possible.

As one might expect, the above scheme generally introduces an error in the
decoding: we now discuss the figure of merit by which we estimate the error introduced
by the compression scheme. In order to understand the operational meaning of the figure
of merit, think of a referee (Charlie) who prepares the states $\sigma_\vi$ with probability 
$p_\vi$, and receives the final states $\tD\tE(\sigma_\vi)$. The figure of merit that
we use corresponds to the probability that, after receiving Bob's final state, 
Charlie is able to distinguish it from the input one. 
For a single instance, this probability is a linear function of the trace-norm
distance $\|\sigma_\vi-\tD\tE(\sigma_\vi)\|_1$. The probability of successful discrimination 
is thus evaluated to $\sum_\vi p_\vi\|\sigma_\vi-\tD\tE(\sigma_\vi)\|_1=\|\rho^{\otimes N}-\tD\tE(\rho^{\otimes N})\|_1$. The protocol has then error $\epsilon$ if the 
compressed and decompressed states $\tD\tE(\sigma_\vi)$ are $\epsilon$-close
to the original states $\sigma_\vi$, in the trace-norm distance. In the case of qubits the above quantity 
is equivalent to fidelity, thanks to the Fuchs-van der Graaf inequalities~\cite{761271}.
The optimal quantum encoding will then make the error arbitrarily small for $N$ large enough, with
rate $R$ as small as possible. Schumacher's quantum source coding theorem shows that 
the optimal rate is equal to the von Neumann entropy $S(\rho)$ of the state $\rho$.

Another way to evaluate the error for a compression scheme is the following: 
Charlie prepares a purification of the density operator $\rho^{\otimes N}$
and sends the $N$ copies of system $\rA$ to Alice. Alice then sends
her share of the pure state to Bob, sending as
few qubits to Bob as possible. After decompressing the received qubits, Bob
shares an entangled state with Charlie. The quality of the compression scheme can then
be evaluated considering how well Charlie can distinguish the initial state from the final one, after
receiving Bob's $N$ systems. The probability that Charlie detects a compression error can be 
evaluated through the input/output fidelity. Again, Schumacher's theorem states that 
Alice can transfer her share of the pure state to Bob by sending $NS(\rho)$
qubits and achieving arbitrarily good fidelity, increasing the length
$N$ of the message. This second perspective answers the question
whether the compression protocol preserves the correlations that
system $\rA^{\otimes N}$ has with a remote system $\rC$. 

Equivalence of the two approaches in assessing the quality of a compression scheme 
shows that the ability to send quantum superpositions is equivalent to the
ability to send entanglement. In other terms, the amount of quantum
information preserved by a compression scheme represents the dimension of the
largest Hilbert space whose superpositions can be reliably
compressed, or equivalently the amount of entanglement that can be
reliably compressed.

According to the above discussion a crucial point in the compression
protocol is to quantify the reliability of the compression map
$\tC:=\tD\tE$, which in the asymptotic limit of $N\to\infty$ must
coincide with the identity map. In quantum theory checking the
reliability of the compression map looking only at its local action,
namely via the fidelity between states $\tC(\rho^{\otimes N})$ and
$\rho^{\otimes N}$, or at the effects on correlations, namely via
entanglement fidelity, is equivalent. This follows from \emph{local
  process tomography} of quantum theory where, given a map $\tC$ on
system $\rA$ one has
\begin{equation}\label{eq:local-process-tomography}
\begin{aligned}
(\tC\otimes\tI_{\rC})(\Psi) = & \Psi \qquad \forall \Psi \in \st(\rA\rC) \\
&\Leftrightarrow\\
\tC(\rho)=&\rho \qquad \forall \rho\in\st{(\rA)}.
\end{aligned}
\end{equation}
This equivalence is due to \emph{local 
discriminability}~\cite{PhysRevA.81.062348,PhysRevA.84.012311,DAriano:2017up} 
of quantum theory, where the discrimination of bipartite quantum states can always be
performed using local measurements only (this property is equivalent to the one
known in the literature as local tomography, or tomographic 
locality~\cite{Araki:1980tr,dakic_brukner_2011,Masanes_2011,Barnum:2014vt}). 
However, in the absence of local discriminability, 
a map preserving local states still can affect correlations with 
remote systems~\cite{DAriano2020information}. This
raises a crucial issue if one aims at studying the compression task beyond
quantum theory, where the reliability of a protocol generally needs to
be verified on extended systems. Indeed, in general, testing a compression 
scheme using ancillary systems is strictly stronger than testing them with local 
schemes.

As a first step in the direction of generalizing the compression
protocol to an arbitrary information theory, in this paper we consider
the case of fermionic systems as carriers of information. Fermionic
computation has been proposed in Ref.~\cite{Bravyi2002210} and later
studied in several works
\cite{Wolf2006,Banuls2007,Friis2013,DAriano2014a,PhysRevA.101.052326}. Differently
from quantum systems, fermions obey the \emph{parity superselection
  rule}. As a consequence, fermionic information theory does not
satisfy local discriminability, thus providing a physically relevant example 
of a theory where the task of compression is not straightforward. 
Indeed, in the case of study, a map $\tC$ that acts as the identity on local states
$\rho^{\otimes N}$ could still destroy the correlations with remote
systems, and then be mistakenly considered as a reliable compression map.

After reviewing the structure of fermionic quantum
information, we prove that the entanglement fidelity is a valid
criterion for the reliability of a fermionic compression map. We
then show an analogous of the quantum source coding theorem in the
fermionic scenario, showing that the minimal compression rate for
which a reliable compression scheme exists is the von Neumann entropy
of the fermionic state. We conclude therefore that the von Neumann
entropy provides the informational content of the state also in the
case of fermionic theory, namely in the presence of parity
superselection. The above result, however, is not a straightforward consequence
of Schumacher's source coding theorem.

\section{Fermionic information theory}

We now briefly review fermionic information theory. The systems of
the theory are made by local fermionic modes (LFMs). A
LFM is the counterpart of the qubit in quantum theory, and can be thought
of as a cell that can be either empty or occupied by a fermionic excitation. 
An $L$-LFMs system, denoted $\rL_\rF$, is described by $L$ fermionic fields
$\varphi_{i}$, satisfying the canonical anticommutation rule (CAR)
$\{\varphi_{i},\varphi_{j}^{\dagger} \} = \delta_{ij} I$,
$\{\varphi_{i},\varphi_{j}\}=0$ where $i,j=1,\dots,L$.  With these
fields one constructs the occupation number operators
$\varphi_{i}^{\dagger}\varphi_{i}$, which can be easily proved to 
have only eigenvalues 0 and 1. The common eigenvector $\ket\Omega$ of the operators
$\varphi_{i}^{\dagger}\varphi_{i}$, $i=1,\ldots,L$ with eigenvalue 0
defines the vacuum state $\ket{\Omega}\bra{\Omega}$ of $\rL_\rF$, 
representing the state in which all the modes are not excited. The fermionic vacuum state in terms of 
the field operators is given by $\ket{\Omega}\bra{\Omega}=\prod_{i=1}^{L}\varphi_{i}\varphi_{i}^{\dagger}$. By applying the operators $\varphi_{i}^{\dagger}$ to $\ket{\Omega}$ the corresponding $i$-th
mode is excited and, by raising $\ket{\Omega}$ in all possible ways, we
get the $2^L$ orthonormal vectors forming the Fock basis in the
occupation number representation: a generic element of this basis is
\begin{align}
\ket{n_{1},\dots,n_{L}}:=(\varphi_{1}^{\dagger}) ^{n_{1}}\dots
(\varphi_{L}^{\dagger}) ^{n_{L}}\ket{\Omega},
\label{eq:states}
\end{align}
with $n_i=\{0,1\}$ corresponding to the occupation number at the
$i$-th site. The linear span of these vectors corresponds to the
antisymmetric Fock space $\cF_{L}$ of dimension $d_{\cF_L}=2^L$.
Notice that the Fock space $\cF_L$ is isomorphic to the Hilbert space
$\mathcal{H}_L$ of $L$ qubits, by the trivial identification of the
occupation number basis with the qubit computational basis. This 
correspondence lies at the basis of the Jordan-Wigner
isomorphism~\cite{Jordan1928,Verstraete2005,Pineda2010} typically used
in the literature to map fermionic systems to qubits systems and
vice-versa. We recall here the definition of the Jordan-Wigner map
\begin{equation}
\begin{aligned}
&J_L(\varphi_i)=\left(\bigotimes_{l=1}^{i-1}\sigma^z_{l}\right)\otimes\sigma^-_{i}\otimes\left(\bigotimes_{k=i+1}^{L}I_k\right),\\
&J_L(\varphi^\dag_i)=J_L(\varphi_i)^\dag,\\
&J_L(XY)=J_L(X)J_L(Y),\\ 
&J_L(aX+bY)=aJ_L(X)+bJ_L(Y),
\end{aligned}
\end{equation}
with $X,Y$ linear combinations of products of field operators on the $L$-LFMs, and where 
we used the standard notation for Pauli sigma operators. In the following we will drop the dependence 
on the number of LFMs in the Jordan-Wigner map, namely we will  write $J(X)$ in place of $J_L(X)$, 
when it will be clear from the context. Notice that the Jordan-Wigner 
isomorphism is implicitly defined in Eq.~\eqref{eq:states}, and, as such, it depends on the 
arbitrary ordering of the modes. All such representations are unitarily equivalent.

Differently from standard qubits, fermionic systems satisfy the parity 
superselection rule~\cite{Schuch2004,Kitaev2004,Schuch2004a,DAriano2014a,fermionic_theory}. 
One can decompose the Fock space $\cF_L$ of system $\rL_\rF$ in the direct sum
$\cF_L=\cF_L^e\oplus\cF_L^o$, with $\cF^e_L$ and $\cF^o_L$ the spaces
generated by vectors with even and odd total occupation number,
respectively. The convex set of states $\st{(\rL_{\rF})}$ corresponds,
in the Jordan-Wigner representation,
to the set of density matrices  on $\cF_{L}$ of the form
$\rho=\rho_e+\rho_o$, with $\rho_e,\rho_o\geq 0$,
$\Tr[\rho_o]+\Tr[\rho_e]\leq 1$ and with $\rho_e$ and $\rho_o$ having
support on $\cF_L^e$ and $\cF_L^o$, respectively, and pure states
are represented by rank one density operators. 
Moreover, the density matrices representing the states 
represent linear combinations of
products of an even number of field operators (see appendix \ref{app:appstates} and \cite{fermionic_theory} for further details). Viceversa, every linear 
combination of products of an even number of field operators that is represented by a density matrix is an admissible state.
Analogously, effects in the set $\eff{(\rL_{\rF})}$ are represented by positive operators on
$\rL_{\rF}$ of the form $a=a_e+a_o$, with $a_e$ and $a_o$ having
support on $\cF_L^e$ and $\cF_L^o$, respectively.  Notice that set of
states and effects of system $\rL_\rF$ have dimension
$d^2_{\cF_L}/2=2^{2L-1}$, compared to the quantum case where the set
of states and effects associated to the Hilbert space $\mathcal{H}_L$
of $L$ qubits has dimension $d^2_{\mathcal{H}_L}=2^{2L}$. 

Given a state $\rho\in\st(\rL_\rF)$ we define the refinement set of
$\rho$ as
$\refset(\rho):=\{\sigma\in\st(\rL_\rF)|\exists
\tau\in\st(\rL_\rF):\rho=\sigma+\tau\}$,
and a state is pure when all the elements in the refinement are
proportional to the state itself. In the following we will denote by
$\purst(\rL_\rF)$ and $\st_{1}(\rL_\rF)$ the set of pure states and
the set of normalized states (of trace one) of system $\rL_\rF$,
respectively. 

Given two fermionic systems $L_{\rF}$ and $M_{\rF}$, we introduce the composition
of the two as the system made of $K\equiv L+M$ LFMs, denoted with the symbol
$\rK_\rF\coloneqq\rL_{\rF}\boxtimes\rM_{\rF}$, or simply $\rK_\rF\coloneqq\rL_{\rF}\rM_{\rF}$.
We use the symbol $\boxtimes$  to distinguish the fermionic parallel composition rule from the quantum one, corresponding to the tensor product $\otimes$.

Given a state $\Psi \in \st(\rL_{\rF}\rM_{\rF})$, one can discard the subsystem
$\rM_{\rF}$ and consider the marginal state, which we denote by $\sigma:=\Tr^{f}_{\rM_{\rF}}(\Psi)$.
We use the symbol $\Tr^{f}_{\rM_{\rF}}$ to denote the fermionic partial trace on the subsystem $\rM_{\rF}$. This is computed by performing the following steps (see ref. \cite{fermionic_theory} for further details): (i) drop all those terms in $\Psi$ containing
an odd number of field operators in any of the LFMs in $\rM_{\rF}$; (ii) remove all the field operators
corresponding to the LFMs in $\rM_{\rF}$ from the remaining terms.
The fermionic trace $\Tr^f(\rho)$ of a state
$\rho\in\st(\rM_{\rF})$ is then defined as
a special case of the partial one, corresponding to the case in which $L=0$.

Finally, the set of transformations from $\rL_\rF$ to $\rM_\rF$,
denoted by $\tr(\rL_\rF\rightarrow \rM_\rF)$, is given by completely
positive maps from $\st(\rL_\rF)$ to $\st(\rM_\rF)$ in the Jordan-Wigner representation. Moreover, we
denote by $\tr_1(\rL_\rF\rightarrow \rM_\rF)$ the set of deterministic
transformations, also called \emph{channels}, from $\rL_\rF$ to $\rM_\rF$,
corresponding to trace-preserving completely positive maps. Like in quantum theory, any fermionic transformation
$\tC\in \tr(\rL_\rF\rightarrow \rM_\rF)$ can be expressed in Kraus
form $\tC(\rho)=\sum_i C_i\rho C^\dag_i$, with deterministic
transformations having Kraus operators $\{C_i\}$ such that
$J(\sum_{i}C_{i}^{\dagger}C_{i})=I_{\cH_L}$, $I_{\cH_L}$ denoting the
identity operator on $\cH_{L}$. For a map $\tC\in\tr(\rL_\rF\rightarrow \rM_\rF)$ with Kraus operators 
$\{C_i\}$, we define its Jordan-Wigner representative $J(\tC)$ as the quantum map with Kraus operators 
$\{J(C_i)\}$. Now, given two transformations $\tC\in\tr(\rL_\rF\rightarrow \rM_\rF)$ and
$\tD\in\tr(\rK_\rF\rightarrow \rN_\rF)$, we denote by $\tC\boxtimes\tD\in\tr(\rL_\rF\rK_\rF\to\rM_\rF\rN_\rF)$ 
the \emph{parallel composition} of $\tC$ and $\tD$, with Kraus operators $\{C_iD_j\}$, where $\{C_i\}$ are Kraus 
operators for $\tC$ and $\{D_j\}$ for $\tD$. We observe that in the 
Jordan-Wigner representation one generally has $J_{L+K}(C_iD_j)\neq J_L(C_i)\otimes J_K(D_j)$, and 
$J_{L+K}(\tC\boxtimes\tD)\neq J_L(\tC)\otimes J_K(\tD)$. If $\tC$ is a transformation 
in $\tr(\rL_\rF\rightarrow \rM_\rF)$, its extension to a composite system $\rL_\rF\rN_\rF$, is given by 
$\tC\boxtimes\tI$, with $\tI$ the identity map of system $\rN_\rF$---whose Jordan-Wigner representative is the 
quantum identity map---and its Kraus operators involve field operators on the $\rL_{\rF}$ modes only. It is worth 
noticing that, despite the Jordan-Wigner representative of 
this map is not necessarily of the form $J_L(\tC)\otimes \tI$, upon suitable choice of the ordering of the 
LFMs that defines the representation, one can always reduce to the case where, actually, 
$J_{L+N}(\tC\boxtimes\tI)=J_L(\tC)\otimes \tI$.

As a special case of the above composition rule, one can define 
$\rho\boxtimes\sigma\coloneqq\rho\sigma\in\st(\rL_\rF\rM_\rF)$ for
the parallel composition of states $\rho\in\st(\rL_\rF)$ and $\sigma\in\st(\rM_\rF)$, and similarly
$a\boxtimes b\coloneqq ab\in\eff(\rL_\rF\rM_\rF)$ for the parallel composition of effects 
$a\in\eff(\rL_\rF)$ and $b\in\eff(\rM_\rF)$.

A useful characterization of fermionic maps in $\tr(\rL_\rF\rightarrow \rL_\rF)$, proved in Ref.~\cite{DAriano2014}, is the following:
\begin{proposition}[Fermionic transformations]\label{prop:fermionickraus}
  All the transformations in $\tr(\rL_\rF\rightarrow \rL_\rF)$ with Kraus operators being linear combinations of
  products of either an even number or an odd number of field
  operators are admissible fermionic transformations. Viceversa, each
  admissible fermionic transformation in $\tr(\rL_\rF\rightarrow \rL_\rF)$ has Kraus operators being superpositions of
  products of either an even number or an odd number of field
  operators.
\end{proposition}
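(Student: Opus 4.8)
The plan is to recast ``superposition of products of an even (resp.\ odd) number of field operators'' in terms of the parity grading of the field-operator algebra and then argue by parity bookkeeping. Let $P\coloneqq\prod_{i=1}^{L}(I-2\varphi_i^\dagger\varphi_i)$ be the parity operator of $\rL_\rF$: it is self-adjoint, unitary, and anticommutes with every $\varphi_i$ and $\varphi_i^\dagger$, so a product of $m$ field operators $X$ obeys $PXP=(-1)^m X$. The hypothesis on a Kraus operator is thus exactly that it be \emph{parity-definite}, i.e.\ $PXP=X$ (call it \emph{even}) or $PXP=-X$ (\emph{odd}); every operator decomposes uniquely as $X=X^{+}+X^{-}$ with $X^{\pm}=\tfrac12(X\pm PXP)$. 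I will also use that, by the characterisation of $\st(\rL_\rF)$ recalled above, an operator is (the Jordan--Wigner representative of) a fermionic state precisely when it is positive, even, and of trace at most $1$; in particular the fermionic states span the linear space of even operators. The two implications are then proved separately.

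For the ``if'' direction I would take a completely positive $\tC$ with a Kraus form $\tC(\rho)=\sum_i C_i\rho C_i^\dagger$ in which every $C_i$ is parity-definite, fix an arbitrary fermionic ancilla $\rN_\rF$, and check directly that $\tC\boxtimes\tI$ maps $\st(\rL_\rF\rN_\rF)$ into itself. For $\Psi\in\st(\rL_\rF\rN_\rF)$ the operator $(\tC\boxtimes\tI)(\Psi)=\sum_i C_i\Psi C_i^\dagger$ (with each $C_i$ embedded trivially on $\rN_\rF$, hence still parity-definite in the composite) is positive, has trace $\Tr[(\sum_i C_i^\dagger C_i)\Psi]\le\Tr[\Psi]\le1$, and is even because $P(C_i\Psi C_i^\dagger)P=(PC_iP)(P\Psi P)(PC_i^\dagger P)=(\pm C_i)\Psi(\pm C_i^\dagger)=C_i\Psi C_i^\dagger$, the sign squaring away precisely because $C_i$ is parity-definite. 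Hence $(\tC\boxtimes\tI)(\Psi)$ is again a fermionic state and $\tC$ is admissible.

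For the ``only if'' direction I would start from an admissible $\tC$: being completely positive it has Kraus operators $\{A_k\}$ (pull back a Kraus representation of $J(\tC)$), which are field-operator polynomials on $\rL_\rF$ with $\tC(\rho)=\sum_k A_k\rho A_k^\dagger$. Splitting $A_k=A_k^{+}+A_k^{-}$, write $\tC=\tC^{+}+\tC^{-}+\tC^{\times}$, where $\tC^{\pm}$ have Kraus operators $\{A_k^{\pm}\}$ and $\tC^{\times}(\rho)\coloneqq\sum_k\big(A_k^{+}\rho (A_k^{-})^\dagger+A_k^{-}\rho (A_k^{+})^\dagger\big)$; it then suffices to show $\tC^{\times}=0$ as a map on all operators, since $\{A_k^{+}\}\cup\{A_k^{-}\}$ is the sought parity-definite Kraus representation. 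To this end I would apply admissibility to a small ancilla: for every even $\Psi$ of $\rL_\rF\rN_\rF$ the state $(\tC\boxtimes\tI)(\Psi)$ is even, and $(\tC^{\pm}\boxtimes\tI)(\Psi)$ are even by the computation above, so $(\tC^{\times}\boxtimes\tI)(\Psi)$ is even; but read off term by term it is manifestly odd, so it vanishes. Since the even operators of $\rL_\rF\rN_\rF$ are spanned by products $X\boxtimes Y$ with $X,Y$ parity-definite of equal parity, this forces $\tC^{\times}(X)\boxtimes Y=0$ for all such $X,Y$; taking $\rN_\rF$ to be a single LFM---which already supplies nonzero parity-odd $Y$ as well as parity-even ones---one concludes $\tC^{\times}(X)=0$ for every parity-definite $X$, hence $\tC^{\times}=0$.

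I do not expect a genuine obstacle: with the parity grading and the state-space characterisation in place, both parts are parity bookkeeping. The one point demanding care is that in the ``only if'' direction admissibility must be invoked on a \emph{nontrivial} ancilla rather than through the local action of $\tC$: because fermionic theory lacks local discriminability, a completely positive map may preserve every local state yet mix parities on correlations (e.g.\ $\rho\mapsto\langle\psi|\rho|\psi\rangle\,\omega$ with $|\psi\rangle$ not an eigenvector of $P$), so it is the even \emph{bipartite} operators---not the even operators of $\rL_\rF$ alone---that span enough to pin down $\tC^{\times}$. The residual routine facts are that the products of field operators span $\Lin{\cF_L}$, so the $A_k$ exist, and that one LFM already carries parity-odd operators.
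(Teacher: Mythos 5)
The paper does not actually prove this proposition: it imports it from Ref.~\cite{DAriano2014}, so there is no in-text argument to compare yours against. Judged on its own terms, your proof is essentially sound. The grading by the parity unitary $P=\prod_i(I-2\varphi_i^\dagger\varphi_i)$ correctly translates the hypothesis into parity-definiteness of each Kraus operator; the ``if'' direction is a complete check that $\tC\boxtimes\tI$ preserves positivity, the trace bound and evenness on every composite; and the heart of the converse --- the cross term $\tC^{\times}$ sends any even input to an odd operator, while admissibility forces the total output to be even, so $(\tC^{\times}\boxtimes\tI)$ annihilates every dilation of every state --- is exactly the right mechanism, and you are right that it must be run on a nontrivial ancilla rather than on local states (a single-Kraus map with $A=\varphi+\varphi\varphi^{\dagger}$ preserves all local states of one mode but is not admissible).

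The one step that does not survive scrutiny is the final factorization: $(\tC^{\times}\boxtimes\tI)(X\boxtimes Y)$ is \emph{not} $\tC^{\times}(X)\boxtimes Y$ when $Y$ is odd, because commuting $Y$ past $(A_k^{\mp})^{\dagger}$ produces a sign depending on the parity of $A_k^{\mp}$, and that sign differs between the two kinds of cross terms; for odd $X$ and odd $Y$ what you actually deduce is the vanishing of $\sum_k\bigl(-A_k^{+}X(A_k^{-})^{\dagger}+A_k^{-}X(A_k^{+})^{\dagger}\bigr)$, not of $\tC^{\times}(X)$. This does not endanger the proposition: the claim is only that an admissible transformation \emph{admits} a parity-definite Kraus representation, and since $(\tC^{\times}\boxtimes\tI)$ vanishes on every state of every composite, $\tC$ and $\tC^{+}+\tC^{-}$ are the same transformation --- which is also all the paper ever uses (e.g.\ in Corollary~\ref{cor:fermioniceff}). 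If you do want $\tC^{\times}=0$ as a linear map on all of $\Lin{\cF_L}$, the clean route is to evaluate $(\tC^{\times}\boxtimes\tI)$ on the maximally entangled state of $\rL_\rF$ with a copy of itself: each $|l\>|l\>$ has even total occupation number, so this state is even, and in the ordering with the system modes first its image is the Choi operator of $J(\tC^{\times})$, whose vanishing gives $\tC^{\times}=0$ outright.
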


\begin{corollary}[Fermionic effects]\label{cor:fermioniceff}
Fermionic effects are positive operators bounded by the identity operator that are linear combinations of 
products of an even number of field operators. Viceversa, every linear combination of 
products of an even number of field operators that is represented by a positive operator bounded by the identity is a fermionic effect.
\end{corollary}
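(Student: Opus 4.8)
The plan is to deduce the statement from Proposition~\ref{prop:fermionickraus}, after rephrasing an effect as an ordinary transformation in $\tr(\rL_\rF\rightarrow\rL_\rF)$. Given $a\in\eff(\rL_\rF)$, I would first consider the transformation $\tA\in\tr(\rL_\rF\rightarrow\rL_\rF)$ that applies the effect $a$ and then reprepares the vacuum, namely $\tA(\rho):=a(\rho)\,\ket{\Omega}\bra{\Omega}$. Being the composition of $a$ with the deterministic --- hence admissible --- preparation of the pure state $\ket{\Omega}\bra{\Omega}$, the map $\tA$ is an admissible fermionic transformation, so Proposition~\ref{prop:fermionickraus} yields a Kraus form $\tA(\rho)=\sum_k M_k\,\rho\,M_k^{\dagger}$ in which each $M_k$ is a superposition of products of either an even or an odd number of field operators.

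From here I would extract the form of $a$. Set $A:=\sum_k M_k^{\dagger}M_k$. Applying the deterministic fermionic effect (represented by the identity in the Jordan-Wigner picture) to $\tA(\rho)$, and using that the vacuum is a normalised state, gives $a(\rho)=\Tr^{f}(\tA(\rho))=\Tr[J(A)\,J(\rho)]$ for every $\rho\in\st(\rL_\rF)$, so $J(A)$ represents $a$ --- the canonical, block-diagonal representative. Now $A$ is \emph{even}: each $M_k^{\dagger}M_k$ is the product of two homogeneous operators of equal parity, hence a superposition of products of an even number of field operators, and so is the sum. Moreover $J(A)\geq 0$ trivially, and $J(A)\leq I$ because $a(\rho)\leq\Tr[J(\rho)]$ for every normalised $\rho$ gives $\Tr[(I-J(A))\,J(\rho)]\geq 0$, and since $I-J(A)$ is block diagonal with respect to $\cF_L=\cF_L^{e}\oplus\cF_L^{o}$ this positivity over all states is equivalent to $I-J(A)\geq 0$. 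This establishes that every fermionic effect is a positive operator bounded by the identity that is a superposition of products of an even number of field operators.

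For the converse I would reverse the construction. Let $a$ be a superposition of products of an even number of field operators whose Jordan-Wigner representative satisfies $0\leq J(a)\leq I$. Evenness makes $J(a)$ block diagonal with respect to $\cF_L=\cF_L^{e}\oplus\cF_L^{o}$, so it admits a spectral decomposition $J(a)=\sum_k\lambda_k\proj{w_k}$ with $\lambda_k\in[0,1]$ and the $\ket{w_k}$ orthonormal parity eigenvectors. Putting $M_k:=\sqrt{\lambda_k}\,\ket{\Omega}\bra{w_k}$, and using that $\ket{\Omega}$ has even parity, each $M_k$ commutes or anticommutes with the total parity operator $\bigotimes_l\sigma^z_l$ according to the parity of $\ket{w_k}$, so $J^{-1}(M_k)$ is a superposition of products of an even, respectively odd, number of field operators. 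By Proposition~\ref{prop:fermionickraus} the map $\tA$ with Kraus operators $\{J^{-1}(M_k)\}$ is an admissible fermionic transformation, and a direct computation gives $J(\tA)(J(\rho))=\Tr[J(a)\,J(\rho)]\,\ket{\Omega}\bra{\Omega}$. Composing $\tA$ with the deterministic fermionic effect $\Tr^{f}$ --- itself admissible --- then produces the fermionic effect $\Tr^{f}\circ\tA\colon\rho\mapsto\Tr[J(a)\,J(\rho)]$, which is exactly the effect represented by $a$; hence $a\in\eff(\rL_\rF)$.

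The step I expect to be the main obstacle is the careful bookkeeping with the Jordan-Wigner dictionary, used both ways: that superpositions of products of an even (respectively odd) number of field operators are precisely the operators commuting (respectively anticommuting) with the total parity $\bigotimes_l\sigma^z_l$ --- equivalently, those block diagonal (respectively block antidiagonal) with respect to $\cF_L=\cF_L^{e}\oplus\cF_L^{o}$ --- and that the constraints imposed by fermionic states probe only the block-diagonal sector, so one must argue throughout with the canonical block-diagonal representative. The remaining ingredients --- closure of the admissible fermionic transformations under composition, and admissibility of the vacuum preparation and of the deterministic fermionic effect --- are routine.
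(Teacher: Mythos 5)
Your proof is correct and follows essentially the same route as the paper's: realise the effect as a fermionic transformation composed with the trace, invoke Proposition~\ref{prop:fermionickraus} to obtain definite-parity Kraus operators $M_k$, and observe that $\sum_k M_k^{\dagger}M_k$ is an even polynomial representing the effect. You additionally spell out the converse direction explicitly (via the Kraus operators $\sqrt{\lambda_k}\,\ket{\Omega}\bra{w_k}$ built from a parity-respecting spectral decomposition), which the paper leaves implicit; this is a completion of the same argument rather than a different approach.
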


The corollary follows immediately from Proposition~\ref{prop:fermionickraus}, since an effect $A$ 
is obtained as a fermionic transformation $\tA$ followed by the discard map, i.e.~the trace. Thus
\begin{align*}
\Tr[\rho A]=\Tr[\tA(\rho)]=\sum_i\Tr[K_i\rho K_i^\dag]=\Tr[\rho\sum_iK_i^\dag K_i],
\end{align*}
namely $A=\sum_iK_i^\dag K_i$. Having the polynomial $K_i$ a definite parity 
(though not necessarily the same for every $i$), $A$ is an even polynomial.

In the following we denote by
$\Lin{\cH_L}$ the set of linear operators on the Hilbert space $\cH_L$ of $L$-qubits
and by $\Lin{\cH_L, \cH_M}$ the set of linear operators from $\cH_L$
to $\cH_M$. It is useful to introduce the isomorphism
between operators $X$ in $\Lin{\cH_L,\cH_M}$ and vectors $|X \kk$ in $\cH_M \otimes \cH_L$ given by
\begin{equation}\label{eq:isom}
  |X \kk = (X \otimes I_{\cH_L}) | I_{\cH_L} \kk = (I_{\cH_M} \otimes
  X^T)  |I_{\cH_M} \kk,
\end{equation}
where $I_{\cH_L}$ is the identity operator in $\cH_L$,
$|I_{\cH_L}\kk \in \cH_L^{\otimes 2}$ is the maximally entangled
vector $|I_{\cH_L}\kk = \sum_{l} |l\>|l\>$ (with $\{|l\>\}$ a fixed
orthonormal basis for $\cH_L$), and $X^T \in \Lin{\cH_M, \cH_L}$ is
the transpose of $X$ with respect to the two fixed bases chosen in
$\cH_L$ and $\cH_M$. Notice also the useful identity
\begin{align}
  Y\otimes Z |X\kk=|YXZ^T\kk,
\end{align}
where $X\in\Lin{\cH_L,\cH_M}$, $Y\in\Lin{\cH_M,\cH_N}$ and
$Z\in\Lin{\cH_L,\cH_K}$. Moreover, for $X,Y\in\Lin{\cH_L,\cH_M}$, one
has $\Tr_{\cH_L}[|X\kk\bb Y|]=XY^\dag$, and
$\Tr_{\cH_M}[|X\kk\bb Y|]=X^TY^*$. We remark that, in the above paragraph, we are dealing with
abstract linear operators on an Hilbert space, disregarding their possible interpretation as Jordan-Wigner 
representatives of some fermionic operator.

A notion that will be used in the following is that of states
dilation.
\begin{definition}[Dilation set of a state $\rho$]\label{def:purification} For any
  $\rho\in\st{(\rL_\rF)}$, we say that
  $\Psi_\rho\in\st(\rL_\rF\rM_\rF)$ for some system $\rM_\rF$,
  is a dilation of $\rho$ if $\rho=\Tr^{f}_{\rM_{\rF}}[\Psi_\rho]$. 
We denote by $D_\rho$ the set of all possible dilations of $\rho$. A pure dilation 
$\Psi_\rho\in\purst(\rL_\rF\rM_\rF)$ of $\rho$
is called a purification.
\end{definition}
Naturally, any purification of $\rho$ belongs to $D_\rho$, more precisely
the set of purifications of $\rho$ is the subset of $D_\rho$
containing pure states.
A main feature of quantum theory that is valid also for fermionic
systems is the existence of a purification of any state, 
that is unique modulo channels on the purifying system.
\begin{proposition}[Purification of states]\label{def:purification} For every
  $\rho\in\st{(\rL_\rF)}$, there exists a purification
  $\Psi_\rho\in\purst(\rL_\rF\rM_\rF)$ of $\rho$
  for some system $\rM_\rF$.
  Moreover, the purification is
  unique up to channels on the purifying system: if
  $\Psi_\rho\in\purst(\rL_\rF\rM_\rF)$ and
  $\Phi_\rho\in\purst(\rL_\rF\rK_\rF)$ are two purifications of $\rho$
  then there exists a channel $\tV\in\tr_1(\rM_\rF\rightarrow\rK_\rF)$
  such that $(\tI_{\rL_\rF}\boxtimes \tV)(\Psi_\rho)=\Phi_\rho$.
\end{proposition}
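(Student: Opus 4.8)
The plan is to reduce everything to the Jordan-Wigner representation and then lean on the corresponding well-known facts about purifications in ordinary quantum theory, being careful about the parity superselection rule at every step. First I would address \emph{existence}. Given $\rho\in\st(\rL_\rF)$, write it in the Jordan-Wigner picture as $\rho=\rho_e+\rho_o$ with $\rho_e,\rho_o\ge 0$ supported on $\cF_L^e$ and $\cF_L^o$ respectively, and $\Tr[\rho_e]+\Tr[\rho_o]\le 1$. The obvious temptation — take the standard quantum purification $|X\kk$ with $XX^\dagger=\rho$ on $\cH_L\otimes\cH_L$ — fails in general, because that vector need not lie in the even sector of the \emph{composite} fermionic system and hence need not represent a fermionic pure state. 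The fix is to purify the even and odd parts separately and glue them with an auxiliary parity-flag mode. Concretely, introduce a purifying system $\rM_\rF$ with (at least) $L+1$ LFMs: use $L$ modes to carry the "reference" copy and one extra mode as a parity register. Choose vectors $|X_e\kk,|X_o\kk\in\cH_L\otimes\cH_L$ with $X_eX_e^\dagger$ and $X_oX_o^\dagger$ equal to $\rho_e$ and $\rho_o$; by the $\Supp$ conditions these can be taken supported on the even/odd sectors in the appropriate way. Then form $|\Psi_\rho\kk = |X_e\kk\ket{0}_{\mathrm{flag}} + |X_o\kk\ket{1}_{\mathrm{flag}}$, and check that (i) it is a unit (or subnormalized) vector, (ii) with the ordering convention placing the flag mode so that the total occupation-number parity of every term is even, it represents a genuine fermionic pure state of $\rL_\rF\rM_\rF$ — i.e.\ it is an even polynomial in the field operators, invoking the characterization of states recalled before Proposition~\ref{prop:fermionickraus} — and (iii) applying $\Tr^f_{\rM_\rF}$ (drop terms with odd field content on $\rM_\rF$, then delete the $\rM_\rF$ field operators) returns $\rho_e+\rho_o=\rho$. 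Step (ii) is where one must be most careful: one has to verify that the flag bookkeeping makes the parity work out, which is exactly the point where the fermionic case departs from a verbatim copy of the quantum proof. Alternatively, and perhaps more cleanly, I would cite that fermionic theory embeds in quantum theory as the parity-invariant sector and that purification is inherited by any such "superselected" subtheory, but writing the flag construction explicitly is self-contained.

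For \emph{uniqueness}, let $\Psi_\rho\in\purst(\rL_\rF\rM_\rF)$ and $\Phi_\rho\in\purst(\rL_\rF\rK_\rF)$ both dilate $\rho$. Pass to Jordan-Wigner representatives: $J(\Psi_\rho)$ and $J(\Phi_\rho)$ are rank-one quantum states on $\cH_L\otimes\cH_M$ and $\cH_L\otimes\cH_K$ whose marginals on $\cH_L$ agree (here one uses that the fermionic partial trace, when the surviving system carries the reduced state, coincides with the ordinary quantum partial trace on the JW side — this needs a short verification because $\Tr^f$ also kills odd-on-$\rM_\rF$ cross terms, but for a pure fermionic state those cross terms are already absent by the even-parity constraint). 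By the standard quantum purification theorem there is a quantum channel $\mathcal{W}:\Lin{\cH_M}\to\Lin{\cH_K}$, indeed an isometry on the supports, with $(\mathcal{I}\otimes\mathcal{W})(J(\Psi_\rho))=J(\Phi_\rho)$. The remaining task — and this is the genuine obstacle — is to show $\mathcal{W}$ can be chosen to be the Jordan-Wigner representative of an admissible \emph{fermionic} channel $\tV\in\tr_1(\rM_\rF\to\rK_\rF)$, i.e.\ with Kraus operators of definite parity (Proposition~\ref{prop:fermionickraus}).

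To handle that, I would argue as follows. Both $\Psi_\rho$ and $\Phi_\rho$, being fermionic states, are even polynomials; their JW images therefore commute with the respective total-parity operators, and in fact the parity operator on the purifying factor is determined by that on $\rL_\rF$. Decompose $\cH_M=\cH_M^e\oplus\cH_M^o$ and likewise for $\cH_K$; the purification isometry, being built from $|X_e\kk,|X_o\kk$-type blocks, is \emph{block-diagonal} with respect to these parity gradings (the even sector of the $\rL_\rF$ marginal pairs with the even sector of the purifier, and odd with odd — again using the support conditions on $\rho_e,\rho_o$). A parity-block-diagonal isometry $W:\cH_M\to\cH_K$ has the form $W_e\oplus W_o$ with $W_e$ mapping even$\to$even and $W_o$ odd$\to$odd; such a $W$ \emph{is} the JW representative of a fermionic operator that is a superposition of even-field-operator products, hence $\tV(\cdot)=W\cdot W^\dagger$ (completed to a channel on the orthogonal complement of the support by any fermionically-admissible map, e.g.\ resetting to the vacuum, which is an even operation) is an admissible fermionic channel with $(\tI_{\rL_\rF}\boxtimes\tV)(\Psi_\rho)=\Phi_\rho$. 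The only loose end is the "completion" of $W$ outside its support: one must pick the completing Kraus operators to each have definite parity, which is always possible since projecting onto and resetting within a parity sector are even operations; I would spell this out but expect it to be routine. In short: existence needs the parity-flag trick, uniqueness needs the observation that the quantum purification isometry between two fermionic purifications is automatically parity-block-diagonal and therefore lifts to a fermionic channel — that automatic block-diagonality is the crux of the argument.
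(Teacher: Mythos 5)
Your overall strategy coincides with the paper's: reduce to the Jordan--Wigner picture, invoke the quantum purification theorem, and then show the connecting map can be taken to be an admissible fermionic channel acting only on the purifier. Your existence argument via a parity flag is also morally the paper's minimal purification (which uses a purifier of $\ceil{\log_2 2r}$ modes, $r=\max(\rank{\rho_e},\rank{\rho_o})$ --- the factor $2$ is exactly your flag). One caveat there: the parity of a term is fixed by its total occupation number and is not affected by ``the ordering convention''; what actually makes $|X_e\kk\ket{0}+|X_o\kk\ket{1}$ a definite-parity vector is choosing the ancilla-side (column) support of $X_o$ in the \emph{even} sector, so that each term of $|X_o\kk$ has odd total parity before the flag is appended. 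You gesture at this but should say it explicitly, since with the naive choice $X_o=\rho_o^{1/2}$ the construction fails.

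The genuine gap is in the uniqueness step, at what you call the crux. The connecting isometry between two fermionic purifications is \emph{not} automatically parity-block-diagonal: two purifications of the same $\rho$ may have opposite overall parities (e.g.\ for $\rho=\ket{0}\bra{0}$ on one mode, both $\ket{00}$ and $\ket{01}$ are legitimate purifications), in which case the even sector of the $\rL_\rF$ marginal pairs with the even sector of one purifier but with the odd sector of the other, and the connecting isometry is block-\emph{anti}diagonal. Your argument as written does not cover this case. The conclusion survives because a parity-flipping operator is the Jordan--Wigner image of an odd polynomial in the field operators, which is still an admissible Kraus by Proposition~\ref{prop:fermionickraus}; this is precisely what the paper's Lemma~\ref{lem:techlemma0} delivers by allowing the two purifications to have arbitrary parities $p,q$. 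Relatedly, once the connecting operator may be odd, the claim that $\tV$ is local on the purifier is no longer immediate from writing $I\otimes W$: an odd local operator on the purifying modes carries a Jordan--Wigner string of $\sigma^z$'s over whichever modes precede it. The paper handles this by reordering so that the purifying modes come first and then invoking Lemma~\ref{lem:techlemma}; you need some equivalent of that step. The remaining points (isometry vs.\ unitary for $M\neq K$, and completing $W$ off its support by parity-definite Kraus operators) are handled loosely but are indeed routine.
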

\begin{proof}
It can be easily verified that every purification of
$\rho\in\st{(\rL_\rF)}$, having even part $\rho_e$ and odd part
$\rho_o$, can be obtained in terms of the minimal one
$J^{-1}(|F\kk\bb
F|)\in\purst(\rL_\rF\rM_\rF)$,
with $F=J(\rho)^{\frac{1}{2}}$, $M=\ceil{\log_2{2r}}$ and $r=\max(\rank{\rho_e}, \rank{\rho_o})$.
Now, let $\Psi_\rho\in\purst(\rL_\rF\rM_\rF)$ and $\Phi_\rho\in\purst(\rL_\rF\rK_\rF)$ 
be two purifications of $\rho$. If $M=K$, let us choose the
ordering defining the Jordan-Wigner isomorphism of Eq.~\eqref{eq:states} in such a way that 
the modes in the purifying systems $\rM_\rF$ precede the modes of $\rL_{\rF}$. Then, using the
quantum purification theorem, we know that there exists a reversible map $\tU$ with unitary
Kraus $U$ such that
$\dket{F_{\rho}}=(U\otimes I)\dket{P_{\rho}}$, where
\begin{align*}
\dket{F_{\rho}}\dbra{F_{\rho}}=J(\Phi_\rho),\quad \dket{P_{\rho}}\dbra{P_{\rho}}=J(\Psi_\rho). 
\end{align*}
The unitary $U$ can be chosen in such a way that  
$J^{-1}(\tU)$ is an admissible fermionic map, namely in such a way that it respects the parity superselection 
rule (see Lemma~\ref{lem:techlemma0} in Appendix~\ref{app:app1}). Moreover, due to Lemma~\ref{lem:techlemma} in Appendix~\ref{app:app1},  $J^{-1}(U\otimes I)$ cannot contain field operators on the 
 modes in $\rL_{\rF}$, and is then local on the purifying system $\rK_{\rF}$.
Now, let $K>M$. Then, we can consider a pure state $\omega$ on the $K-M$ modes and take
the parallel composition $\Psi_{\rho}\boxtimes\omega$. This is still a purification of $\rho$,
and by the previous argument, there exists a reversible channel $\tU\in\tr_{1}(\rK_\rF\rightarrow \rK_\rF)$
such that $\Phi_{\rho}=(\tI_{\rL_{\rF}}\boxtimes\tU) (\Psi_{\rho}\boxtimes\omega)=(\tI_{\rL_{\rF}}
\boxtimes\tV)(\Psi_{\rho})$ where $\tV$ is the channel defined by $\tV=\tU(\tI\boxtimes\omega)$.
If $K<M$, we consider $\Phi_{\rho}\boxtimes\omega$, where $\omega$ is any pure state on $N=M-K$ modes system,
and we have $\Phi_{\rho}\boxtimes\omega=(\tI_{\rL_{\rF}}\boxtimes\tU)(\Psi_{\rho})$. Now we discard
the additional modes, and the channel connecting the purifications is the sequential composition of $\tU$ and the discarding 
map: $\tV:=(\tI_{\rK_\rF}\boxtimes\Tr^{f}_{\rN_{\rF}})\tU$. 
\end{proof} 

The main difference between fermionic and quantum information lies in the
notion of what Kraus operators correspond to local maps. While in the case
of qubit systems local maps acting on the $i$-th qubit of a composite system 
have Kraus operators that can be factorized as a non trivial operator on the $i$-th 
tensor factor $\mathbb C^2$ of the total Hilbert space, in the case of the fermionic 
Fock space $\cF_L$ a local transformation on the $i$-th mode can be represented in the Jordan-Wigner 
isomorphism by operators that act non trivially on factors $\mathbb C^2$ different from the $i$-th one.
This fact is the source of all the differences between the theory of qubits 
and fermionic theory, including superselection and features that it affects, 
such as the notion of entanglement~\cite{DAriano2014} and local states discrimination
protocols~\cite{fermLOCC1,fermLOCC2}. Due to parity superselection, fermionic theory does not
satisfy \emph{local process tomography}, namely the property stating 
that two transformations $\tC_1,\tC_2\in\tr(\rL_\rF\rightarrow\rM_\rF)$ 
are equal iff they act in the same way on the local states in $\st(\rL_\rF)$, 
namely $\tC_1(\rho)=\tC_2(\rho)$ for every $\rho\in\st(\rL_\rF)$ (see for
example Eq.~\eqref{eq:local-process-tomography} in the introduction on
the equality between the compression map $\tC$ and the identity map).
As a consequence, fermionic theory also violates \emph{local tomography}.
A typical example of a transformation that is locally equivalent to the
identity but differs from it when extended to multipartite systems is the parity transformation,
as shown in the following. Let us consider a single fermionic mode system
$\mathrm{1}_\rF$, whose possible states are constrained to be of the
form $J(\rho)=q_{0}\ket{0}\bra{0}+q_{1}\ket{1}\bra{1}$ by the parity
superselection rule. Let $P_{0}$ and $P_{1}$ be the projectors on
$\ket{0}$ and $\ket{1}$ respectively, namely on the even and odd
sector of the Fock space. The parity transformation $\tP$, that in the Jordan-Wigner representation $J(\tP)$
has Kraus operators $P_{0}$ and $P_{1}$, acts as the
identity $\tI_{\mathrm{1}_\rF}$ when
applied to states in $\st{(\mathrm{1}_\rF)}$. However, taking the
system $\mathrm{2}_\rF$ and considering the extended transformation
$\tP\boxtimes\tI_{\mathrm{1}_\rF}$ on $\st{(\mathrm{2}_\rF)}$ one
notices that $\tP$ differs from the identity map
$\tI_{\mathrm{1}_\rF}$. Indeed, the state $J^{-1}(\ket{\Psi}\bra{\Psi})$, with
$\ket{\Psi}=\frac{1}{\sqrt{2}}(\ket{00}+\ket{11})$ is a legitimate
fermionic state in $\st{(\mathrm{2}_\rF)}$, and one can
straightforwardly verify that

\begin{align*}
(\tP\boxtimes\tI_{\mathrm{1}_\rF})[J^{-1}(\ket{\Psi}\bra{\Psi})]&=\frac{1}{2}J^{-1}(\ket{00}\bra{00}+\ket{11}\bra{11})
\\
&\neq J^{-1}(\ket{\Psi}\bra{\Psi}).
\end{align*}

\subsection{Identical channels upon-input of
  $\rho$}\label{sec:close-maps}

In the following we will be interested in quantitatively assessing how closely a channel (the
compression map) resembles another one (the identity map), provided that we know 
that the input state corresponds to a given $\rho$. To this end we introduce the notion of
identical fermionic channels upon-input of $\rho$.

Given two fermionic channels
$\tC_1,\tC_2\in\tr_1{(\rL_\rF\to\rM_\rF)}$ and a state
$\rho\in\st(\rL_\rF)$, we say that $\tC_1$ and $\tC_2$ are equal upon-input of $\rho$ if
\begin{equation}\label{eq:equal-fermionic-channels}
  (\tC_1\boxtimes \tI)(\Sigma)=(\tC_2\boxtimes \tI)(\Sigma)\qquad \forall\Sigma\in\refset(D_\rho).
\end{equation}
Operationally,
this means that one cannot discriminate between $\tC_1$ and $\tC_2$
when applied to any dilation $\Psi_\rho$ of the state $\rho$,
independently of how $\Psi_\rho$ has been prepared. Suppose that
$\Psi_\rho\in D_\rho$ was prepared as $\Psi_\rho=\sum_i\Sigma_i$, for
some refinement of $\Psi_\rho$. Even using the knowledge of the
preparation, one cannot distinguish between $\tC_1$ and
$\tC_2$. Notice that, differently from the quantum case 
here it is necessary to check the identity between channels on bipartite systems. 

Following the above definition, one can quantify how close two channels are.  One
has that $\tC_1$ and $\tC_2$ are $\varepsilon$-close upon-input of
$\rho$ if
\begin{align}\label{eq:close-fermionic-channels}
&\sum_i\norma{[(\tC_1-\tC_2)\boxtimes \tI](\Sigma_i)}_1\leq\varepsilon\quad\forall \{\Sigma_i\}:\ \sum_i\Sigma_i\in D_\rho,
\end{align}
where $\norma{X}_1$ is the $1$-norm of $J(X)$
%~\footnote{In the
%  paper $\norma{T}_{p}$ denotes the Shatten $p$-norm of the operator
%  $J(T)\in \Lin{\cF_{L},\cF_{M}}$ which is defined as:
%\begin{equation*}
%  \norma{T}_{p}={\rm Tr}[J(T^{\dagger}T)^{p/2}]^{1/p},
%\end{equation*}
%and the $\infty$-Shatten norm corresponds to the sup norm of $J(T)$ (see for instance \cite{bhatia97})
%\begin{equation*}
%\norma{T}_{\infty}=\sup \{ J(T)\phi : \phi \in \cF_L, \norma{\phi}\leq 1\}.
%\end{equation*}
%%A relevant inequality is~\cite{QIwatrous}
%%\begin{equation}
%%\norma{STR}_{p} \leq \norma{S}_{\infty} \norma{T}_{p} \norma{R}_{\infty},
%%\label{eq:normprop}
%%\end{equation}
%%where $S$, $T$, $R$ are linear operators defined on suitable Fock
%%spaces.
%}
. One can straightforwardly prove that the trace distance
$d(\rho,\sigma):=\frac{1}{2}\norma{\rho-\sigma}_1$ has a clear
operational interpretation in terms of the maximum success probability
of discrimination between the two states $\rho$ and
$\sigma$. Eq.~\eqref{eq:close-fermionic-channels} provides then an
upper bound for the probability of discriminating between $\tC_1$ and
$\tC_2$ when applied to the dilations of $\rho$, including their
refinements: $\tC_1$ and $\tC_2$ cannot be distinguished with a succes
probability bigger than
$\frac{1}{2}+\frac{1}{4}\varepsilon$. Accordingly, a sequence of
channels $\tC_N\in\tr_1{(\rL_\rF\to\rM_\rF)}$ converges to the channel
$\tC\in\tr_1{(\rL_\rF\to\rM_\rF)}$ upon-input of $\rho$ if
\begin{equation*}
  \lim_{N\to\infty}\norma{[(\tC_N-\tC)\boxtimes \tI](\Sigma)}_1=0\quad \forall \Sigma\in\refset(D_\rho).
\end{equation*}

\section{Fermionic compression}

Consider now a system $\rL_\rF$ and let
$\rho\in\st{(\rL_\rF)}$ be the generic state of the system. As usual
the source of fermionic information is supposed to emit $N$
independent copies of the state $\rho$ . A fermionic compression
scheme $(\tE_N,\tD_N)$ consists of the following two steps:

\begin{enumerate}[leftmargin=*]
\item Encoding: Alice encodes the system $\rL_\rF^{\boxtimes N}$ via 
  a channel $\tE_N:\st(\rL_\rF^{\boxtimes N})\rightarrow \st(\rM_\rF)$,
  where the target system is generally a system of $M$-LFMs. The map
  $\tE_N$ produces a fermionic state
  $\tE(\rho^{\boxtimes N})$ with support $\Supp(\tE(\rho^{\boxtimes N}))$
  on a Fock space $\cF_M$ of dimension $d_{\cF_M}(N)$ smaller than the
  one of the original state $\rho^{\boxtimes N}$. The compression rate
  is defined as the quantity
\begin{align*}
  R=\log_2 d_{\cF_M}(N)/N.
\end{align*}
Alice sends the system $\rM_\rF$ to Bob using $N\ceil{R}$ noiseless fermionic
channels.
\item Decoding: Finally Bob sends the system $\rM_\rF$ through a decompression
  channel $\tD_N:\st(\rM_\rF)\rightarrow \st(\rL_\rF^{\boxtimes N})$.
\end{enumerate}

% Let us start by giving a formal definition of compression scheme in
% FT, where the fermionic information source is given by a state
% $\rho\in\st(\rA)$.
% \begin{definition}
%   Let $\rA$ be an L-LFMs system and let $R>0$. A compression scheme
%   with rate $R$ is a pair of channels $(\tE^{N},\tD^{N})$,
%   $\tE^{N}:\rA^{\otimes N}\rightarrow \rB_{N}$ and
%   $\tD^{N}:\rB_{N}\rightarrow \rA^{\otimes N}$, where $\rB$ is an
%   $\ceil{R}$-LFMs system. $\tE^{N}$ and $\tD^{N}$ are called encoding
%   and decoding respectively.
% \end{definition}

The scheme $(\tE_N,\tD_N)$ overall transforms the $L^{\boxtimes N}$ LFMs, with a compression map
$\tC_N:=\tD_N\tE_N$. The latter can be more or less ``good'' (in a sense that
will be precisely defined) in preserving the information which is
contained in the system, depending on $\rho$ itself. The goal now is to define
the notion of reliable compression scheme once we are provided with an
information source $\rho$.

\subsection{Reliable compression scheme}

The aim of a compression scheme, besides reducing the amount of information 
carriers used, is to preserve all the information that is
possibly encoded in a given state $\rho$.  What we actually mean is not
only to preserve the input state and keep track of the correlations of
our system with an arbitrary ancilla, but also to preserve these
informations for any procedure by which the input system and its correlations have 
been prepared. In other words, even the agent that prepared the system along with possible 
ancillary systems, must have a small success probability in detecting the effects
of compression on the original preparation. This amounts to require that the compression channel 
$\tC_{N}$ must be approximately equal to the identity channel upon-input of
$\rho$, and more precisely that in the limit of $N\to\infty$ the two
channels must coincide upon-input of $\rho$.

In Section~\ref{sec:close-maps} we introduced the notion of
$\varepsilon$-close channels upon-input of $\rho$. This notion can now
be used to quantify the error, say $\varepsilon$, introduced by the map
$\tC_N$ in a compression protocol given the source $\rho$. According to
Eq.~\eqref{eq:close-fermionic-channels} we have indeed the following
definition of a reliable compression scheme 
\begin{definition}[Reliable compression scheme] Given a state $\rho\in\st(\rL_\rF)$, a
  compression scheme $(\tE_{N},\tD_{N})$ is $\varepsilon$-reliable if $\sum_i\left\lVert(\tC_{N}\boxtimes\tI)(\Sigma_i)-\Sigma_{i}\right\rVert<\varepsilon$
  for every $\{\Sigma_i\}$ such that $\sum_i\Sigma_i\in D_{\rho^{\boxtimes N}}$, 
  where $\tC_N:=\tD_{N}\tE_{N}$.
\end{definition}

It is clear from the definition that in order to check the reliability of a
fermionic compression map one should test it on states of an arbitrary
large system, since the dilation set $D_{\rho^{\boxtimes N}}$ includes
dilations on any possible ancillary system. It is then necessary to find a
simpler criterion to characterize the reliability of a compression
scheme. Let us start with a preliminary definition.
\begin{definition}\label{def:squareroot}
Let $\rho\in\st(\rL_\rF)$. We define its \emph{square root} $\rho^\frac12$ as follows
  \begin{align}
  \rho^\frac12\coloneqq J^{-1}[J(\rho)^\frac12].
  \end{align}
\end{definition}
One can easily prove that the square root of a fermionic state is well defined,
i.e.~it does not depend on the particular Jordan-Wigner representation $J$ chosen 
(see Appendix~\ref{app:well-defined-operations}).
In the following we show that a useful criterion for reliability can be
expressed via \emph{entanglement fidelity}:

\begin{definition}[Entanglement fidelity]\label{def:ent_fid}
  Let $\rho\in\st_1(\rL_\rF)$,
  $\tC\in\tr_{1}(\rL_\rF\rightarrow\rM_\rF)$ and
  $\Phi_{\rho}\in\purst(\rL_\rF\rK_\rF)$ be any purification of
  $\rho$. The entanglement fidelity is defined as and
  $F(\rho,\tC)=F(\Phi_{\rho},(\tC\boxtimes\tI)(\Phi_{\rho}))^{2}$, where
$F(\rho,\sigma):=\Tr[J(\rho^{1/2}\sigma\rho^{1/2})^{1/2}]$ denotes the
  Uhlmann's fidelity between states
  $\rho,\sigma\in\st_1(\rL_\rF)$.
\end{definition}

We notice that the Uhlmann fidelity of fermionic states is well defined, namely it is independent of the ordering of the fermionic modes (see Appendix~\ref{app:well-defined-operations}). As a consequence also the Entanglement fidelity, given in terms of the Uhlmann one, must be well defined.

Since by definition the Uhlmann fidelity of fermionic states coincides with the one of their Jordan-Wigner 
representatives and the same for their trace-norm distance, given $\rho,\sigma\in \st(\rL_\rF)$, the Fuchs-van 
der Graaf inequalities~\cite{761271} hold as a trivial consequence of their quantum counterparts
\begin{equation}\label{eq:fuchs-vandergraaf}
1-F(\rho,\sigma) \leq \frac{1}{2}\norma{\rho-\sigma}_{1} \leq \sqrt{1 - F(\rho,\sigma)^{2}}.
\end{equation}
  
The following proposition summarizes the main properties of fermionic entanglement fidelity that
will be used in the remainder.
\begin{proposition}
%[Properties of fermionic entanglement fidelity]
Let $\rho\in\st_1(\rL_\rF)$,
  $\tC\in\tr_{1}(\rL_\rF\rightarrow\rL_\rF)$ and
  $\Phi_{\rho}\in\purst(\rL_\rF\rK_\rF)$ be any purification of
  $\rho$. Entanglement fidelity has the following properties.
\begin{enumerate}[leftmargin=*]
\item $F(\rho,\tC)$ is independent of the particular choice for the
  purification $\Phi_{\rho}$.  
\item If the ordering is chosen in such a way that the $L$ modes are all before the purifying ones,
the following identity holds:
\begin{equation}
F(\rho,\tC)=\sum_{i}|\Tr[J(\rho) C_{i}]|^{2} \label{EntFididentity}
\end{equation}
for arbitrary Kraus decomposition
$J(\tC)=\sum_{i}C_{i}\cdot C_{i}^{\dagger}$ of the Jordan-Wigner representative $J(\tC)$.
From the second inequality in \eqref{eq:fuchs-vandergraaf} it follows that, if
$F(\rho,\tC) \geq 1 - \delta$, one has
\begin{equation}
\norma{(\tC\boxtimes\tI_{\rC})(\Phi_{\rho})-\Phi_{\rho}}_{1} \leq 2\sqrt{\delta}
\label{eq:rhoineq}
\end{equation}
for every purification $\Phi_{\rho}$ of $\rho$.
\end{enumerate}
\end{proposition}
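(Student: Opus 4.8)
The proposition collects three facts about fermionic entanglement fidelity; I would treat them in order, leaning throughout on the fact that for states and channels supported on a single fermionic system the Jordan-Wigner map is a faithful representation that preserves all the relevant quantities (Uhlmann fidelity, trace norm, and — with the ordering convention stated — tensor structure). The overall strategy is therefore to transport each claim to its quantum counterpart and invoke the known quantum facts, taking care only with the point where the fermionic tensor product $\boxtimes$ differs from $\otimes$.

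\emph{Independence of the purification (item 1).} By Proposition~\ref{def:purification}, any two purifications $\Phi_\rho\in\purst(\rL_\rF\rK_\rF)$ and $\Phi_\rho'\in\purst(\rL_\rF\rK_\rF')$ are related by a channel on the purifying system: $(\tI_{\rL_\rF}\boxtimes\tV)(\Phi_\rho)=\Phi_\rho'$ for some $\tV\in\tr_1(\rK_\rF\to\rK_\rF')$. Since $(\tC\boxtimes\tI_{\rK_\rF'})$ commutes with $(\tI_{\rL_\rF}\boxtimes\tV)$ — the former acts only on $\rL_\rF$, the latter only on the purifying modes, and in a suitable ordering their Jordan-Wigner representatives genuinely factor as $J(\tC)\otimes\tI$ and $\tI\otimes J(\tV)$ — one gets $(\tC\boxtimes\tI)(\Phi_\rho') = (\tI\boxtimes\tV)(\tC\boxtimes\tI)(\Phi_\rho)$. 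Monotonicity of Uhlmann fidelity under the channel $\tI\boxtimes\tV$, applied in both directions (it is reversible on the relevant states because $\tV$ maps one purification to the other), forces $F(\Phi_\rho',(\tC\boxtimes\tI)(\Phi_\rho'))=F(\Phi_\rho,(\tC\boxtimes\tI)(\Phi_\rho))$. This is the standard quantum argument, and the only fermionic subtlety is the choice of ordering making the two extended maps commute as honest tensor factors; that is exactly the freedom granted in the last paragraph of Section II.

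\emph{The Kraus formula (item 2, Eq.~\eqref{EntFididentity}).} Here I would fix the ordering so that the $L$ modes precede the purifying ones, take the minimal purification $\Phi_\rho = J^{-1}(\proj{F})$ with $F=J(\rho)^{1/2}$ as in the proof of Proposition~\ref{def:purification}, so that $\dket{F}=(J(\rho)^{1/2}\otimes I)\dket{I}$ in the double-ket notation of Eq.~\eqref{eq:isom}. Then $J((\tC\boxtimes\tI)(\Phi_\rho)) = \sum_i (C_i\otimes I)\proj{F}(C_i^\dagger\otimes I) = \sum_i \dket{C_i J(\rho)^{1/2}}\dbra{C_i J(\rho)^{1/2}}$, using the identity $Y\otimes Z\dket{X}=\dket{YXZ^T}$. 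Since $\Phi_\rho$ is pure, the squared Uhlmann fidelity between $\Phi_\rho$ and this operator is just the expectation value $\dbra{F}\big(\sum_i \proj{C_i J(\rho)^{1/2}}\big)\dket{F}/\dbrakket{F}{F}$, wait — more cleanly, $F(\Phi_\rho,(\tC\boxtimes\tI)(\Phi_\rho))^2 = \dbra{F}J((\tC\boxtimes\tI)(\Phi_\rho))\dket{F}$ because $\Phi_\rho$ is rank one and normalized, and $\dbrakket{F}{C_i J(\rho)^{1/2}} = \Tr[J(\rho)^{1/2}C_i J(\rho)^{1/2}] = \Tr[J(\rho)C_i]$ by the overlap formula $\dbrakket{A}{B}=\Tr[A^\dagger B]$. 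Summing the squared moduli gives $\sum_i|\Tr[J(\rho)C_i]|^2$, which is Eq.~\eqref{EntFididentity}. Independence of the Kraus decomposition then follows either directly (a change of Kraus representation is a unitary mixing of the $C_i$, which preserves $\sum_i|\mathrm{tr}|^2$ as the squared norm of a vector) or simply by appeal to item 1 together with the already-proven formula.

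\emph{The trace-norm bound (item 2, Eq.~\eqref{eq:rhoineq}).} This is immediate: by definition $F(\rho,\tC)=F(\Phi_\rho,(\tC\boxtimes\tI)(\Phi_\rho))^2$, so $F(\rho,\tC)\ge 1-\delta$ gives $F(\Phi_\rho,(\tC\boxtimes\tI)(\Phi_\rho))\ge\sqrt{1-\delta}\ge 1-\delta$; feeding this into the upper Fuchs–van der Graaf inequality \eqref{eq:fuchs-vandergraaf}, which by the remark preceding the proposition holds verbatim for fermionic states, yields $\tfrac12\norma{(\tC\boxtimes\tI)(\Phi_\rho)-\Phi_\rho}_1\le\sqrt{1-F^2}$ with $F=F(\Phi_\rho,\cdot)$; since $F^2=F(\rho,\tC)\ge 1-\delta$ we get $\sqrt{1-F^2}\le\sqrt{\delta}$, hence $\norma{(\tC\boxtimes\tI)(\Phi_\rho)-\Phi_\rho}_1\le 2\sqrt{\delta}$, and by item 1 this holds for every purification.

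\emph{Main obstacle.} The only place that needs genuine care rather than routine transcription is keeping the bookkeeping of the fermionic parallel composition $\boxtimes$ versus $\otimes$ under control: a priori $J_{L+K}(\tC\boxtimes\tI)\ne J_L(\tC)\otimes\tI$, and the computation in item 2 tacitly uses the factorized form. This is legitimate precisely because of the ordering hypothesis stated in item 2 (and exploited in item 1), under which one can arrange $J_{L+N}(\tC\boxtimes\tI)=J_L(\tC)\otimes\tI$; I would make that reduction explicit at the start and then everything else is the quantum proof.
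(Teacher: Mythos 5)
Your proof is correct, and for the central identity \eqref{EntFididentity} it follows the paper's route exactly: fix the ordering so that $J(\tC\boxtimes\tI)$ has Kraus operators $C_i\otimes I$, write the purification in the double-ket formalism, and evaluate $\dbra{F}\sum_i\proj{C_iF}\dket{F}$ via $\Tr[\dket{C_iF}\dbra{F}]=\Tr[F^\dagger C_iF]=\Tr[J(\rho)C_i]$. Where you genuinely diverge is item 1: the paper carries an arbitrary purification through the computation by writing it as $\dket{FV^T}$ with $V$ an isometry, so that purification-independence is read off for free from the fact that $V$ drops out of the final expression; you instead prove independence first, by the standard data-processing argument — connect two purifications by a channel $\tV$ on the purifying system (Proposition on purification uniqueness), commute $\tI\boxtimes\tV$ past $\tC\boxtimes\tI$, and apply monotonicity of the Uhlmann fidelity in both directions — and then only compute the formula for the minimal purification. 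The paper's version is more economical (one calculation yields both claims); yours is more modular and makes explicit which structural facts are being used (purification uniqueness, commutation of the two extended maps, fidelity monotonicity under fermionic channels), each of which you correctly flag as holding in the fermionic setting under the stated ordering convention. Your treatment of the Kraus-decomposition independence (unitary mixing preserves $\sum_i|\Tr[J(\rho)C_i]|^2$) and of the trace-norm bound via Fuchs--van der Graaf matches the paper's.
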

\begin{proof}
%By definition, the entanglement fidelity can be written in the following way
%\[
%F(\rho,\tC)=\Tr[\Phi_{\rho}(\tC\boxtimes\tI)(\Phi_{\rho})].
%\]
%Since the reordering of the modes can be thought of as a unitary change of basis and the trace
%is basis independent, this is a well defined quantity \h{(see Appendix~\ref{app:well-defined-operations})}. 
Let $\Phi_{\rho}\in\purst(\rL_{\rF}\rM_{\rF})$ be a purification of $\rho$.
If we choose the trivial ordering for the LFMs, the Kraus of $J(\tC\boxtimes\tI)$ are of the form
$C_{i}\otimes I$. Moreover, since the minimal purification $\dket{F}\dbra{F}$ (introduced in the proof
of proposition \ref{def:purification}) and
$J(\Phi_{\rho})$ both purify the same quantum state, they are connected through an isometry $V$.
Recalling that for quantum states $\ket{\psi}\bra{\psi}$ and $\sigma$ the quantum Uhlmann fidelity is given by 
$F(\ket{\psi}\bra{\psi},\sigma)=\bra{\psi}\sigma\ket{\psi}^{1/2}$, we find
\[
\begin{aligned}
F(\rho,\tC)=&\sum_{i}\Tr(\dket{FV^{T}}\dsc{FV^{T}}{C_{i}FV^{T}}\dbra{C_{i}FV^{T}})\\
 =&\sum_{i}|\Tr(\dket{C_{i}FV^{T}}\dbra{FV^{T}})|^{2}= \\
 =& \sum_{i}|\Tr[J(\rho) C_{i}]|^{2},
\end{aligned}
\]
namely, the claimed formula in \eqref{EntFididentity}.
Since $\Phi_{\rho}$ is arbitrary, this also implies independence from the choice of the purification.
\end{proof}

% ----
% In quantum theory the entanglement fidelity is a good quantity in
% order to establish the reliability of a compression scheme, indeed it
% is proved that high entanglement fidelity $F(\rho,\tC)$ implies that
% the channel $\tC$ is close to the identity according to BLA . Here we
% prove an analogous result in the fermionic case

% \begin{theorem}
% Let $\rho\in\st_{1}(\rA)$ and let $\tE$ be a channel. $\forall\varepsilon>0$ there exists 
% $\delta>0$ such that $F(\rho,\tE)>1-\delta$ implies the following bound
% \begin{equation}
% \left\lVert \tE(\sigma)-\sigma\right\rVert_{1}\leq\varepsilon
% \end{equation}
% For any $\sigma\in\refset(\rho)$ 
% \end{theorem}

In quantum theory a compression scheme
$(\tE_{N},\tD_{N})$ is reliable when the entanglement fidelity
$F(\rho^{\boxtimes N},\tC_{N})$, with $\tC_N:=\tD_N \tE_{N}$,
approaches $1$ as $N\rightarrow \infty$. Here we prove an analogous
reliability criterion for the fermionic case. 

%To this end we show that
%if a fermionic channel has high entanglement fidelity on a given
%state, then it preserves all decompositions of any state in
%$D_{\rho}$. Before stating this result we prove a
%preliminary lemma.
%\begin{lemma}\label{lem:ref} 
%Let $\rho\in\st(\rL_\rF)$. Then
%  $\refset(D_\rho)=D_{\refset(\rho)}$
%\end{lemma}
%
%\begin{proof}
%  First notice that
%  \begin{enumerate}
%  \item $\Psi\in\st(\rL_\rF\rM_\rF)$ is in $D_{\refset(\rho)}$ if
%    $\Tr_{\rM_\rF}[\Psi]=\sigma$, with $\sigma\in\refset
%(\rho)$.
%\item $\Psi\in\st(\rL_\rF\rM_\rF)$ is in $\refset(D_{\rho})$ if there
%  exists  states $\Omega,\Sigma\in\st{(\rL_\rF\rM_\rF)}$, with
%  $\Omega\in D_\rho$, and $\Omega=\Sigma+\Psi$.
%\end{enumerate}
%Now we show that the two conditions above coincide:
%\noindent
%1$\Rightarrow$2: Since $\sigma\in\refset(\rho)$, it is
%$\rho=\sigma+\tau$ for some $\tau\in\st(\rL_\rF)$. Consider now the
%state $\Sigma=\tau\otimes \omega$, with $\omega$ and arbitrary state
%of $\st{(\rM_\rF)}$. It is now trivial to verify that
%$\Omega=\Sigma+\Psi$ is a state of $D_\rho$.
%
%\noindent
%2$\Rightarrow$1: By definition $\Omega=\Sigma+\Psi$ is in $D_\rho$,
%namely $\Tr_{\rM_\rF}[\Omega]=\rho$. Accordingly, one has both
%$\Tr_{\rM_\rF}[\Sigma]\in \refset(\rho)$ and
%$\Tr_{\rM_\rF}[\Psi]\in \refset(\rho)$.
%
%\end{proof}

We can now prove the following proposition and the subsequent corollary
providing a simple reliability criterion for fermionic compression.

\begin{proposition}
  Given a state $\rho\in\st_{1}(\rL_\rF)$ and a channel
  $\tC\in\tr_1(\rL_\rF\to\rL_\rF)$, $\forall\varepsilon>0$ there
  exists $\delta>0$ such that if $F(\rho,\tC)>1-\delta$ then
  $\sum_i\left\lVert
    [(\tC-\tI)\boxtimes\tI](\Sigma_i)\right\rVert_{1}\leq\varepsilon$
  for every $\{\Sigma_i\}$ 
  such that $\sum_i\Sigma_i\in D_\rho$.
\end{proposition}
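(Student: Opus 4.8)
The plan is to reduce the statement on arbitrary refinements $\{\Sigma_i\}$ of an arbitrary dilation of $\rho$ to a statement about a single purification, where entanglement fidelity directly controls the trace-norm error via Eq.~\eqref{eq:rhoineq}. The starting observation is that any dilation $\Psi_\rho\in D_\rho$ is, by the purification theorem (Proposition~\ref{def:purification}), the image of a purification $\Phi_\rho\in\purst(\rL_\rF\rK_\rF)$ under a channel $\tW$ on the purifying system: $\Psi_\rho=(\tI_{\rL_\rF}\boxtimes\tW)(\Phi_\rho)$ — possibly after enlarging the purifying system, which costs nothing. Since $\tW$ acts only on the ancilla, it commutes with $\tC\boxtimes\tI$, and because $\tW$ (being a channel, hence CPTP in the Jordan-Wigner representation) is a contraction in trace norm, we get
\begin{equation*}
\norma{[(\tC-\tI)\boxtimes\tI](\Psi_\rho)}_1\le\norma{[(\tC-\tI)\boxtimes\tI](\Phi_\rho)}_1.
\end{equation*}
By Eq.~\eqref{eq:rhoineq}, $F(\rho,\tC)>1-\delta$ forces the right-hand side to be at most $2\sqrt\delta$; choosing $\delta<(\varepsilon/2)^2$ handles the case of a \emph{single} (normalized) dilation.

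The remaining work is to upgrade this to a sum over a refinement $\{\Sigma_i\}$ with $\sum_i\Sigma_i=\Psi_\rho\in D_\rho$, i.e.\ to bound $\sum_i\norma{[(\tC-\tI)\boxtimes\tI](\Sigma_i)}_1$ rather than just the norm of the sum. I would do this by passing to a \emph{steering} purification: enlarge the ancilla once more so that the refinement $\{\Sigma_i\}$ of $\Psi_\rho$ is realized by a projective (or coarse-grained) measurement on the extra ancillary modes applied to a single pure state $\Phi'_\rho\in\purst(\rL_\rF\rK_\rF\rK'_\rF)$ which purifies $\rho$; concretely each $\Sigma_i=(\tI\boxtimes\tI\boxtimes\Pi_i)(\Phi'_\rho)$ for orthogonal effects $\Pi_i$ on $\rK'_\rF$ summing to the identity. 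Then, using that the $\Pi_i$ are orthogonal and commute with $(\tC-\tI)\boxtimes\tI\boxtimes\tI$, the operators $[(\tC-\tI)\boxtimes\tI](\Sigma_i)$ have mutually orthogonal supports on the $\rK'_\rF$ factor, so their trace norms add up exactly to the trace norm of the sum:
\begin{equation*}
\sum_i\norma{[(\tC-\tI)\boxtimes\tI](\Sigma_i)}_1=\norma{[(\tC-\tI)\boxtimes\tI](\Phi'_\rho)}_1\le 2\sqrt\delta,
\end{equation*}
where the last inequality is again Eq.~\eqref{eq:rhoineq} applied to the purification $\Phi'_\rho$ of $\rho$. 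Setting $\delta:=(\varepsilon/2)^2$ completes the argument.

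The main obstacle is the second step: justifying that an \emph{arbitrary} refinement of an arbitrary dilation can be simulated by an orthogonal measurement on a suitably enlarged purifying system, entirely within the fermionic formalism (i.e.\ respecting parity superselection and using $\boxtimes$, fermionic partial trace, and fermionic effects as characterized in Corollary~\ref{cor:fermioniceff}). One must check that the ancillary system and the steering effects can be chosen to be legitimate fermionic objects — this is where the Jordan-Wigner representation and the purification theorem do the heavy lifting, since one can carry out the standard quantum steering construction on $J$-representatives and then check that the resulting effects are even polynomials in the field operators. A secondary point to verify carefully is that $J(\tW)$ and $J(\tC\boxtimes\tI)$ genuinely commute as quantum maps — this relies on the remark in the excerpt that, for a suitable ordering of the modes, $J(\tC\boxtimes\tI)=J(\tC)\otimes\tI$ acts trivially on the ancilla, so that the contraction estimate above is valid. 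Once these structural facts are in place, the inequalities are immediate from the Fuchs–van der Graaf bound already recorded in Eq.~\eqref{eq:fuchs-vandergraaf}.
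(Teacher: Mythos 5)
Your proposal takes essentially the same route as the paper: realize the refinement $\{\Sigma_i\}$ by steering from a single purification of $\Sigma\coloneqq\sum_i\Sigma_i$ (which is automatically a purification of $\rho$), and then control everything with the entanglement-fidelity bound \eqref{eq:rhoineq}. Two remarks. First, your opening reduction from a dilation to a purification via a channel $\tW$ on the ancilla is redundant: since entanglement fidelity is independent of the purification, one can purify $\Sigma$ directly and apply \eqref{eq:rhoineq} to that purification, which is exactly what the paper does. Second, the claimed \emph{equality} $\sum_i\norma{[(\tC-\tI)\boxtimes\tI](\Sigma_i)}_1=\norma{[(\tC-\tI)\boxtimes\tI](\Phi'_\rho)}_1$ is not correct as written: the $\Sigma_i$ live on the original (smaller) system, i.e.\ they arise from the pinched blocks $(\tI\boxtimes\Pi_i)(\Phi'_\rho)(\tI\boxtimes\Pi_i)$ only \emph{after} tracing out the extra ancilla, and the partial trace can only decrease the trace norm; the correct statement is the inequality ``$\leq$'', which is fortunately the direction you need, so the conclusion survives. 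Finally, the paper sidesteps what you identify as the main obstacle (realizing an arbitrary refinement by an \emph{orthogonal} measurement, which would require a fermionic Naimark-type dilation): it uses a general POVM $\{b_i\}$ on the purifying ancilla together with the dual characterization $\norma{X}_1=\max_{b}\Tr[Xb]$, packaging the optimal local effects into a single global effect $A=\sum_i a_i\boxtimes b_i\leq I$, so that the sum of trace norms collapses into one expectation value bounded by $\norma{[(\tC-\tI)\boxtimes\tI](\Psi_\rho)}_1$ with no orthogonality needed. With the equality replaced by the inequality and the steering effects allowed to be a general fermionic POVM, your argument is sound.
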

\begin{proof}
  Firstly we observe that, given a set of states $\{\Sigma_i\}$ such that 
  $\sum_i\Sigma_i\in D_\rho$, considering any purification $\Psi_\rho\in\purst(\rL_\rF\rK_\rF\rN_\rF)$ of 
  $\Sigma\coloneqq\sum_i\Sigma_i$, one can  find a POVM $\{b_i\}\in\eff(\rN_\rF)$ such that
  $\Sigma_i=\Tr_{\rN_\rF}[(I_{\rL_\rF\rK_\rF}\boxtimes b_i)\Psi_\rho]$.
  As a consequence, we have
  \begin{align*}
  \sum_i&\norma{(\tC-\tI)\Sigma_i}_1\\
  &=\sum_i\norma{\Tr_{\rN_\rF}[\{(\tC-\tI)\boxtimes\tI_{\rN_\rF}\}(\Psi_\rho)(I_{\rL_\rF\rK_\rF}\boxtimes b_i)]}_1\\
  &\leq\norma{\{(\tC-\tI)\boxtimes\tI_{\rN_\rF}\}(\Psi_\rho)}_1,
  \end{align*}
  where the last inequality follows from the equivalent definition of the 1-norm for $X\in\st_\mathbb R(\rL_\rF)$
  \begin{align*}
  \|X\|_1=\max_{b\in\eff(\rL_\rF)}\Tr[Xb],
  \end{align*}
  and from the fact that, for $\{a_i\}\subseteq\eff(\rL_\rF\rK_\rF)$ such that 
  \begin{align*}
  \norma{\Tr_{\rN_\rF}[&\{(\tC-\tI)\boxtimes\tI_{\rN_\rF}\}(\Psi_\rho)(I_{\rL_\rF\rK_\rF}\boxtimes b_i)]}_1\\
  &=\Tr_{\rN_\rF}[\{(\tC-\tI)\boxtimes\tI_{\rN_\rF}\}(\Psi_\rho)(a_i\boxtimes b_i)],
  \end{align*}
  one can write
  \begin{align*}
    \sum_i\norma{(&\tC-\tI)\Sigma_i}_1\\
    &=\sum_i\Tr_{\rN_\rF}[\{(\tC-\tI)\boxtimes\tI_{\rN_\rF}\}(\Psi_\rho)(a_i\boxtimes b_i)]\\
    &=\Tr_{\rN_\rF}[\{(\tC-\tI)\boxtimes\tI_{\rN_\rF}\}(\Psi_\rho)A],
  \end{align*}
  where $A\coloneqq \sum_i(a_i\boxtimes b_i)$.
  Now, by the Fuchs-van der Graaf inequalities, if $F(\rho,\tC)\geq1-\delta$, then
  \begin{align*}
  \norma{\{(\tC-\tI)\boxtimes\tI_{\rN_\rF}\}(\Psi_\rho)}_1\leq2\sqrt{1-F(\rho,\tC)}\leq2\sqrt{\delta}.
  \end{align*}
  The thesis is then obtained just taking $\delta\leq\varepsilon^2/4$.
  \end{proof}

\begin{corollary}[Reliable compression scheme] Given a state $\rho\in\st(\rL_\rF)$, a
  compression scheme $(\tE^{N},\tD^{N})$ is $\epsilon$-reliable if one has
  $F(\rho^{\boxtimes N},\tC_N)>1-\delta$, where  $\delta=\epsilon^2/4$, and $\tC_N:=\tD^{N}\tE^{N}$.
\end{corollary}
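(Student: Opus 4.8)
The plan is to obtain the statement as an immediate specialization of the preceding proposition. First I would observe that, since $\tE^{N}\colon\st(\rL_\rF^{\boxtimes N})\to\st(\rM_\rF)$ and $\tD^{N}\colon\st(\rM_\rF)\to\st(\rL_\rF^{\boxtimes N})$, the compression map $\tC_N:=\tD^{N}\tE^{N}$ is a channel in $\tr_1(\rL_\rF^{\boxtimes N}\to\rL_\rF^{\boxtimes N})$, i.e.\ an endomorphism channel on the single fermionic system $\rL_\rF^{\boxtimes N}$; likewise $\rho^{\boxtimes N}$ is a normalized state of that very system. Hence the previous proposition applies verbatim with the substitutions $\rL_\rF\rightsquigarrow\rL_\rF^{\boxtimes N}$, $\rho\rightsquigarrow\rho^{\boxtimes N}$, $\tC\rightsquigarrow\tC_N$, and $\varepsilon\rightsquigarrow\epsilon$.

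Next I would invoke the proposition with the choice $\delta=\epsilon^{2}/4$: assuming $F(\rho^{\boxtimes N},\tC_N)>1-\delta$, it yields
\[
  \sum_i\norma{[(\tC_N-\tI)\boxtimes\tI](\Sigma_i)}_1\leq\epsilon
\]
for every $\{\Sigma_i\}$ with $\sum_i\Sigma_i\in D_{\rho^{\boxtimes N}}$. Since $[(\tC_N-\tI)\boxtimes\tI](\Sigma_i)=(\tC_N\boxtimes\tI)(\Sigma_i)-\Sigma_i$, this is precisely the inequality appearing in the definition of an $\epsilon$-reliable compression scheme. To match the strict inequality demanded there, I would recall that the proof of the proposition in fact bounds the left-hand side by $2\sqrt{1-F(\rho^{\boxtimes N},\tC_N)}$ via the Fuchs--van der Graaf inequalities~\eqref{eq:fuchs-vandergraaf}, and this is strictly smaller than $2\sqrt{\delta}=\epsilon$ whenever the hypothesis $F(\rho^{\boxtimes N},\tC_N)>1-\delta$ holds strictly.

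There is essentially no obstacle here: the only point worth checking is that nothing in the proof of the proposition required the input system to be ``elementary'' rather than a composite $\rL_\rF^{\boxtimes N}$. But that argument relied solely on the purification theorem for fermionic states, on the fact that any refinement of a dilation can be realized by a suitable POVM on an extra purifying system, and on the Fuchs--van der Graaf inequalities --- all of which hold for arbitrary fermionic systems, in particular for $\rL_\rF^{\boxtimes N}$. Therefore the corollary follows directly, and I would state it in a single sentence referring back to the proposition and the definition of reliability.
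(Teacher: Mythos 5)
Your proposal is correct and matches the paper's (implicit) argument: the corollary is intended as an immediate instance of the preceding proposition applied to the system $\rL_\rF^{\boxtimes N}$, the state $\rho^{\boxtimes N}$, and the channel $\tC_N$, with $\delta=\epsilon^2/4$. Your extra remark that the strict hypothesis $F>1-\delta$ propagates through the bound $2\sqrt{1-F}<2\sqrt{\delta}=\epsilon$ to give the strict inequality required by the definition of $\epsilon$-reliability is a careful touch the paper glosses over, but it changes nothing substantive.
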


\subsection{Fermionic typical subspace} 

At the basis of the quantum source coding theorem lies the notion of typical
subspace, that in turn generalizes to the quantum case that of typical
sequences and typical sets of classical information. We now introduce
the notion of typical subspace also for fermionic systems and use it to show
that, like in quantum theory, the von Neumann entropy of a fermionic state is the 
rate that separates the region of rates for which a reliable compression scheme exists
from that of unachievble rates. In order to do this we
have to verify that the compression map given in terms of the
projection on the typical subspace represents an admissible fermionic
map. 

We start by defining the notion of logarithm of a fermionic state
\begin{definition}\label{def:logarithm} 
Let $\rho$ be a fermionic state. We define its logarithm as
\begin{align}
\log_{2}\rho=J^{-1}[\log_2 J(\rho)].
\end{align}
\end{definition}
Then we define the von Neumann entropy of a fermionic state via its Jordan-Wigner
representative.
\begin{definition}\label{def:fermionic_entropy}
Given a fermionic state $\rho$, its von-Neumann entropy is defined as 
\begin{align}
S_f(\rho):=S(J(\rho))=-\Tr(J(\rho)\log_{2}J(\rho)).
\end{align}
\end{definition}
These definitions are independent of the particular Jordan-Wigner transform corresponding
to a given ordering of the modes (see Appendix~\ref{app:well-defined-operations}).

 When we use the orthonormal decomposition for $J(\rho)=\sum_{x_i} p_i\proj{x_i}$, this reduces to the
Shannon entropy of the classical random variable $X$ that takes values
in $\Rng(X)=\{x_1,x_2,\ldots x_n\}$, called range of $X$, with
probability distribution $(p_{1},p_{2},\ldots, p_{n})$:
$S_f(\rho)=H(X)=-\sum_{i}p_{i}\log_{2}p_{i}$. We remind that $N$
i.i.d.~copies of the state $\rho$ are represented as
$ J(\rho^{\boxtimes N})=J(\rho)^{\otimes N}
=\sum_{x_\vi\in\Rng(X)^N}p_\vi\proj{x_\vi}, $. With
$\Typ_{N,\varepsilon}(\rho)$ we will denote the typical set of the
random variable $X$.
\begin{definition}[Typical subspace] Let $\rho\in\st(\rL_\rF)$ with orthonormal decomposition
  $J(\rho)=\sum_{x_i\in\Rng(X)} p_i\proj{x_i}$. The $\varepsilon$-{\em
    typical} subspace $\QTyp_{N,\varepsilon}(\rho)$ of
  $\cH^{\otimes N}_{L}$ is defined as
\begin{equation}
\QTyp_{N,\varepsilon}(\rho):=\Span\{\ket{x_\vi}\ |\  x_\vi\in\sT_{N,\varepsilon}(X)\},
\end{equation}
where $\ket{x_\vi}:=\ket{x_{i_1}}\ket{x_{i_2}}\ldots\ket{x_{i_N}}$,
and $X$ is the random variable with $\Rng(X)=\{x_i\}$ and
$\mathbb P_X(x_i):=p_i$.
\end{definition}

It is an immediate consequence of the definition of typical subspace
that
\begin{equation*}
  \QTyp_{N,\varepsilon}(\rho):=\Span\left\{\ket{x_\vi}\ |\ \left|\frac{1}{N}\log_2 \frac{1}{\mathbb P_{X^N}(x_\vi)}
      -S_f(\rho)\right|\leq\varepsilon\right\}.
\end{equation*}
We will denote the projector on the typical subspace as
\begin{equation}\label{eq:typ_space_proj}
\begin{aligned}
P_{N,\varepsilon}(\rho):=&\sum_{x_\vi\in\Typ_{N,\varepsilon}(X)}\proj{x_\vi}\\
=&
\sum_{x_\vi\in\sT_{N,\varepsilon}(X)}\proj{x_{i_1}}\otimes\cdots\otimes \proj{x_{i_N}},
\end{aligned}
\end{equation}
and we have that
$\dim(\QTyp_{N,\varepsilon}(\rho))=\Tr[P_{N,\varepsilon}(\rho)]=|\Typ_{N,\varepsilon}(X)|$.
%Notice that the definition of $\QTyp_{N,\varepsilon}\equiv\QTyp_{N,\varepsilon}(\rho)$ relies on $\rho$, however, since we will keep $\rho$ fixed, we will not write such dependence explicitly, in an analogous way as for the classical
%set $\Typ_{N,\epsilon}\equiv\Typ_{N,\epsilon}(X)$ which relies on the random variable $X$.

Notice that some of the superpositions of vectors in the typical subspace might not be 
legitimate fermionic pure states, as their parity might be different. However, up to now,
we only defined the typical subspace as a mathematical tool, and it does not need a consistent
physical interpretation. We will come back to this point later (see Lemma~\ref{lem:fermionic-projector}), when we will discuss the 
physical meaning of the projection $P_{N,\varepsilon}(\rho)$.
%above defined projector might not be a legitimate fermionic operator, as 
%the parity of the vectors $\ket{x_\vi}$ is not fixed. However, one can easily realise that
%\begin{align}
%P_{N,\varepsilon}(\rho)=P^{(E)}_{N,\varepsilon}(\rho)+P^{(O)}_{N,\varepsilon}(\rho),
%\end{align}
%where $P^{(X)}_{N,\varepsilon}(\rho)$ projects on the span of those vectors $\ket{x_\vi}$ such that 
%$x_\vi\in T_{N,\varepsilon}(\rho)$ and $\ket{x_\vi}$ has parity $X$. Now, as $\rho$ is block-diagonal 
%with respect to the even and odd decomposition, we can conclude that
%\begin{align*}
%P_{N,\varepsilon}(\rho)\rho P_{N,\varepsilon}(\rho)=P^{(E)}_{N,\varepsilon}(\rho)\rho P^{(E)}_{N,\varepsilon}(\rho)+P^{(O)}_{N,\varepsilon}(\rho)\rho P^{(O)}_{N,\varepsilon}(\rho),
%\end{align*}
%which is a legitimate fermionic map. 
Now, it is immediate to see that
\begin{align}
&\Tr[P_{N,\varepsilon}(\rho)J(\rho)^{\otimes N}]\nonumber\\
&\qquad=\sum_{x_\vi\in\sT_{N,\varepsilon}(\rX)}\mathbb P_{X^N}(
x_\vi)=\mathbb P_{\rX^N}[x_\vi\in\Typ_{N,\varepsilon}(X)]. 
\label{Qxtyp}
\end{align}
As in quantum theory, also the fermionic typical subspace has the
following features:
\begin{proposition}[Typical subspace]\label{thm:typical-subspace} Let $\rho\in\st(\rL_\rF)$. The following statements hold:
\begin{enumerate}[leftmargin=*]
\item For every $\varepsilon>0$ and $\delta>0$ there exists $N_0$ such that for every $N\geq N_0$
\begin{equation}
\Tr[P_{N,\varepsilon}(\rho)J(\rho)^{\otimes N}]\geq 1-\delta.
\end{equation}
\item \label{it:qcard}For every $\epsilon>0$ and $\delta>0$ there
  exists $N_0$ such that for every $N\geq N_0$ the dimension of the
  typical subspace $\QTyp_{N,\varepsilon}(\rho)$ is bounded as
\begin{equation}
(1-\delta)2^{N(S_f(\rho)-\varepsilon)}\leq\dim(\QTyp_{N,\varepsilon}(\rho))\leq 2^{N(S_f(\rho)+\varepsilon)}
\end{equation}
\item \label{it:nontriv}For given $N$, let $S_N$ denote an arbitrary
  orthogonal projection on a subspace of $\cF_{L}^{\otimes N}$ with
  dimension $\Tr(S_N)< 2^{NR}$, with $R<S_f(\rho)$ fixed. Then for every
  $\delta>0$ there exists $N_0$ such that for every $N\geq N_0$ and
  every choice of $S_N$
\begin{equation}
\Tr[S_NJ(\rho)^{\otimes N}]\leq\delta.
\label{eq:condsn}
\end{equation}
\end{enumerate}
\end{proposition}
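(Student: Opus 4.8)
The plan is to reduce all three statements to the classical theory of typical sequences, exploiting the fact that the typical subspace, its projector, and $J(\rho)^{\otimes N}$ are all diagonal in the product basis $\{\ket{x_\vi}\}$. Since $J(\rho)=\sum_i p_i\proj{x_i}$ with $(p_i)$ a probability distribution, and $J(\rho)^{\otimes N}=\sum_{x_\vi}\mathbb P_{X^N}(x_\vi)\proj{x_\vi}$, everything that follows is really a statement about the i.i.d.\ random variable $X$ with law $(p_i)$ and entropy $H(X)=S_f(\rho)$, so we may quote the standard classical asymptotic equipartition property (AEP) wholesale.

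For statement~1: by Eq.~\eqref{Qxtyp}, $\Tr[P_{N,\varepsilon}(\rho)J(\rho)^{\otimes N}]=\mathbb P_{X^N}[x_\vi\in\Typ_{N,\varepsilon}(X)]$, and the weak law of large numbers applied to the i.i.d.\ sequence $-\log_2\mathbb P_X(X_j)$ (whose mean is $H(X)=S_f(\rho)$) gives that this probability tends to $1$; hence for any $\delta$ there is $N_0$ making it $\geq 1-\delta$. For statement~2: every typical sequence satisfies $2^{-N(S_f(\rho)+\varepsilon)}\leq\mathbb P_{X^N}(x_\vi)\leq 2^{-N(S_f(\rho)-\varepsilon)}$ by definition of the typical set; summing over the $|\Typ_{N,\varepsilon}(X)|=\dim(\QTyp_{N,\varepsilon}(\rho))$ such sequences and using $\mathbb P_{X^N}[\Typ_{N,\varepsilon}(X)]\leq 1$ gives the upper bound $\dim\leq 2^{N(S_f(\rho)+\varepsilon)}$ immediately, while combining the lower bound on individual probabilities with statement~1 (the typical set has total probability $\geq 1-\delta$ for $N\geq N_0$) gives $(1-\delta)\leq\dim\cdot 2^{-N(S_f(\rho)-\varepsilon)}$, i.e.\ the claimed lower bound.

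Statement~3 is the converse-type bound and is the one genuinely requiring an argument rather than a definition-chase; it is the main obstacle. The plan is the standard splitting trick: given an arbitrary projector $S_N$ with $\Tr(S_N)<2^{NR}$ and $R<S_f(\rho)$, write $\Tr[S_N J(\rho)^{\otimes N}]=\Tr[S_N P_{N,\varepsilon}(\rho)J(\rho)^{\otimes N}]+\Tr[S_N(I-P_{N,\varepsilon}(\rho))J(\rho)^{\otimes N}]$. The second term is bounded by $\Tr[(I-P_{N,\varepsilon}(\rho))J(\rho)^{\otimes N}]\leq\delta'$ for $N$ large, by statement~1. For the first term, choose $\varepsilon$ small enough that $R<S_f(\rho)-\varepsilon$; on the typical subspace $P_{N,\varepsilon}(\rho)J(\rho)^{\otimes N}P_{N,\varepsilon}(\rho)\leq 2^{-N(S_f(\rho)-\varepsilon)}P_{N,\varepsilon}(\rho)\leq 2^{-N(S_f(\rho)-\varepsilon)}I$, so
\begin{align*}
\Tr[S_N P_{N,\varepsilon}(\rho)J(\rho)^{\otimes N}P_{N,\varepsilon}(\rho)]\leq 2^{-N(S_f(\rho)-\varepsilon)}\Tr[S_N]< 2^{-N(S_f(\rho)-\varepsilon)}2^{NR}=2^{-N(S_f(\rho)-\varepsilon-R)},
\end{align*}
which $\to 0$ since the exponent is negative. (To handle the cross term cleanly one uses that $\Tr[S_NJ(\rho)^{\otimes N}]\leq\Tr[S_N P_{N,\varepsilon}(\rho)J(\rho)^{\otimes N}P_{N,\varepsilon}(\rho)]^{1/2}+\Tr[(I-P_{N,\varepsilon}(\rho))J(\rho)^{\otimes N}]^{1/2}$, the standard ``gentle measurement''/Cauchy--Schwarz estimate used in Schumacher's converse, or alternatively writes $J(\rho)^{\otimes N}$ as a block matrix with respect to $P_{N,\varepsilon}(\rho)$.) Choosing first $\varepsilon$ so that $S_f(\rho)-\varepsilon-R>0$, then $N_0$ large enough that both contributions are below $\delta/2$, gives $\Tr[S_NJ(\rho)^{\otimes N}]\leq\delta$ uniformly over all admissible $S_N$, since none of the bounds depended on the particular choice of $S_N$ beyond its rank. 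I do not expect the fermionic structure to intervene anywhere: $P_{N,\varepsilon}(\rho)$ and $J(\rho)^{\otimes N}$ live in $\Lin{\cH_L^{\otimes N}}$ and are treated as abstract operators, exactly as the paper flags before the proposition; the physical admissibility of the associated fermionic projection is deferred to Lemma~\ref{lem:fermionic-projector} and plays no role here.
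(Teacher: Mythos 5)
Your proposal is correct and coincides with the paper's approach: the paper simply states that the proof is identical to the standard quantum (Nielsen--Chuang) typical-subspace argument, which is exactly what you reproduce --- the classical AEP for statements 1 and 2, and the splitting of $\Tr[S_NJ(\rho)^{\otimes N}]$ relative to $P_{N,\varepsilon}(\rho)$ for statement 3 (where, since $P_{N,\varepsilon}(\rho)$ commutes with $J(\rho)^{\otimes N}$, the off-typical term is bounded by $\Tr[(I-P_{N,\varepsilon}(\rho))J(\rho)^{\otimes N}]$ even without the Cauchy--Schwarz step you mention). You are also right that the fermionic structure plays no role here and is deferred to Lemma~\ref{lem:fermionic-projector}.
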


The proof of the above properties is exactly the same as the one of
quantum theory (see for instance \cite{QInielsenchuang}). However,
order to exploit the same scheme proposed by Schumacher for the
quantum case, one has to check that the encoding and decoding channels
given in the constructive part of the proof
%terms of the projector $P_{N,\varepsilon}(\rho)$ on the
%typical subspace 
are admissible fermionic maps. In particular, the
encoding channel makes use of the projector $P_{N,\varepsilon}(\rho)$
as a Kraus operator, therefore, we have to show that it is a legitimate
Kraus for a fermionic map. This is proved in the following lemma based
on characterization of fermionic transformations of
Proposition~\ref{prop:fermionickraus}.

\begin{lemma}\label{lem:fermionic-projector}
Let $\rho$ be a fermionic state.  The projector $P_{N,\varepsilon}(\rho)$ of eq \ref{eq:typ_space_proj} is the Kraus
  operator of an admissible fermionic transformation.
\end{lemma}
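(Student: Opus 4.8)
The plan is to invoke Proposition~\ref{prop:fermionickraus}: to show that $P_{N,\varepsilon}(\rho)$ is an admissible Kraus operator, it suffices to show that it is a linear combination of products of an even number of field operators (it is automatically positive, hence bounded below by zero, and bounded above by the identity since it is a projector). The key observation is that $P_{N,\varepsilon}(\rho)$ is a sum of projectors $\proj{x_{i_1}}\otimes\cdots\otimes\proj{x_{i_N}}$ onto the eigenvectors of $J(\rho)^{\otimes N}$, and these eigenvectors are the orthonormal eigenvectors $\ket{x_i}$ of $J(\rho)$ itself. Since $\rho$ is a legitimate fermionic state, by the discussion following the direct-sum decomposition $\cF_L=\cF_L^e\oplus\cF_L^o$ each eigenvector $\ket{x_i}$ of $J(\rho)$ has definite parity, i.e.\ lies entirely in $\cF_L^e$ or entirely in $\cF_L^o$; equivalently, $J(\rho)$ commutes with the parity operator and so can be simultaneously diagonalized with it. Hence each $\proj{x_i}=J(A_i)$ for some even polynomial $A_i$ in the field operators on $\rL_\rF$ — this is exactly Corollary~\ref{cor:fermioniceff} applied to the (subnormalized) effect $\proj{x_i}$, since a rank-one projector onto a vector of definite parity is a positive operator bounded by the identity that is a linear combination of products of an even number of field operators.

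Next I would lift this to the $N$-fold composite. Each single-copy projector $\proj{x_i}$, being an even polynomial in the $\rL_\rF$ fields, corresponds under the fermionic composition rule to an operator on $\rL_\rF^{\boxtimes N}$ that is again an even polynomial (the box-composition $\rho\boxtimes\sigma=\rho\sigma$ of even operators is even, and more generally a product of $N$ even polynomials on disjoint sets of modes is an even polynomial on the union). Therefore each summand $\proj{x_{i_1}}\boxtimes\cdots\boxtimes\proj{x_{i_N}}$ in eq.~\eqref{eq:typ_space_proj} is an even polynomial in the field operators of $\rL_\rF^{\boxtimes N}$, and a finite sum of even polynomials is even. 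Thus $P_{N,\varepsilon}(\rho)$ is an even polynomial in the field operators, and being a projector it satisfies $0\leq P_{N,\varepsilon}(\rho)\leq I$. By Corollary~\ref{cor:fermioniceff} it is a fermionic effect, and by Proposition~\ref{prop:fermionickraus} — with the single Kraus operator $P_{N,\varepsilon}(\rho)$, which has definite (even) parity — it is the Kraus operator of an admissible fermionic transformation.

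The one subtlety worth flagging — and the remark preceding the lemma already anticipates it — is that while each \emph{basis vector} $\ket{x_\vi}$ spanning $\QTyp_{N,\varepsilon}(\rho)$ has definite parity, arbitrary superpositions within the typical subspace need not, so the typical subspace itself is not the support of a fermionic system and its projector is not an effect ``by inspection'' of the subspace. The content of the lemma is precisely that the \emph{projector}, as an operator, is nonetheless even: this is because it is diagonal in a definite-parity basis, which forces it to commute with the parity operator and hence (by the correspondence of Corollary~\ref{cor:fermioniceff}) to be an even polynomial. I expect the main obstacle in writing this cleanly to be the bookkeeping that the fermionic box-composition of even polynomials on disjoint mode sets is even — one should either cite the relevant composition rule from Ref.~\cite{fermionic_theory} or note that in a Jordan-Wigner representation with a suitable ordering the representative of $\proj{x_{i_1}}\boxtimes\cdots\boxtimes\proj{x_{i_N}}$ is just the tensor product $\proj{x_{i_1}}\otimes\cdots\otimes\proj{x_{i_N}}$, which is manifestly the JW image of an even polynomial since each factor is — so that parity is preserved and Proposition~\ref{prop:fermionickraus} applies.
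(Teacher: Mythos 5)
Your proposal is correct and follows essentially the same route as the paper's own proof: each eigenvector $\ket{x_i}$ of $J(\rho)$ has definite parity because $\rho$ respects the parity superselection rule, so each rank-one projector $\proj{x_i}$ is (the Jordan--Wigner image of) an even polynomial, the $N$-fold product of even polynomials on disjoint mode sets is even, and a sum of even polynomials is even, whence Proposition~\ref{prop:fermionickraus} applies. The extra remarks about degenerate eigenvalues, the detour through Corollary~\ref{cor:fermioniceff}, and the caveat about superpositions within the typical subspace are sound but do not change the argument.
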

\begin{proof}
  By proposition \ref{prop:fermionickraus} the projector on the
  typical subspace $P_{N,\varepsilon}(\rho)$ is a legitimate fermionic Kraus
  if it is the sum of products of either an even or an odd number of
  fermionic fields. Let us consider the single projection
  $\ket{x_{\vi}}\bra{x_{\vi}}$. This is given by the tensor product
  $\ket{x_{i_{1}}}\bra{x_{i_1}}\otimes\dots\otimes\ket{x_{i_{N}}}\bra{x_{i_N}}$, where each
  $\ket{x_{i_{k}}}$ is an eigenvector of the density matrix $J(\rho)$
  representing the fermionic state $\rho$, and, as such, it has a definite parity. 
  Thus, each factor in the above expression of $\ket{x_{\vi}}\bra{x_{\vi}}$ is the Jordan-Wigner 
  representative of an even polynomial, and also the projection $\ket{x_{\vi}}\bra{x_{\vi}}$ 
  is thus the representative of an even polynomial for every $\vi$, which is given, in detail, 
  by the product $J^{-1}(\ket{x_{\vi}}\bra{x_{\vi}})=\prod_{j=1}^N J^{-1}(\ket{x_{i_j}}\bra{x_{i_j}})$. 
  Now, 
%  since 
%  $J^{-1}(\ket{x_{\vi}}\bra{x_{\vi}})$ is a product of an even number 
%  of field operators, 
	by Proposition~\ref{prop:fermionickraus}, 
%it is a 
%  fermionic effect.
%  can be represented as
%  $\ket{x_{i_{k}}}\equiv Q_{i_{k}}(\phi_{\ell})\ket{\Omega}$ where
%  $Q_{i_{k}}$ is a polynomial with definite parity with respect to
%  fermionic fields. This implies that
%  $\ket{x_{\vi}}=Q_{i_{1}}\otimes\dots\otimes Q_{i_{N}}
%  \ket{\Omega}^{\otimes N}$
%  is also given by a polynomial with definite parity applied to the
%  tensor product of vacuum states. 
%  Hence, 
%  each single projector
%  $\ket{x_{\vi}}\bra{x_{\vi}}$ is the Jordan-Wigner representative of a sum of products of an
%  even number of fermionic fields and 
  $P_{N,\varepsilon}(\rho)$ is the Jordan-Wigner representative of a
  legitimate fermionic Kraus operator.
\end{proof}

\subsection{Fermionic source coding theorem} 

We can now prove the source coding theorem for
fermionic information theory.

\begin{theorem}[Fermionic source coding]\label{thm:source-coding} 
Let $\rho\in\st_{1}(\rL_\rF)$ be a
  state of system $\rL_\rF$. Then for every $\delta>0$ and $R>S_f(\rho)$
  there exists $N_0$ such that for every $N\geq N_0$ one has a
  compression scheme $\{\tE_N,\tD_N\}$ with rate $R$, and
  $F(\rho^{\boxtimes N},\tD_N\tE_N)\geq1-\delta$. Conversely,
  for every $R<S_f(\rho)$ there is $\delta\geq0$ such that for every
  compression scheme $\{\tE_N,\tD_N\}$ with rate $R$ one has
  $F(\rho^{\boxtimes N},\tD_N\tE_N)\leq\delta$.
\end{theorem}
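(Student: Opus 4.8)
The plan is to mimic Schumacher's construction, pushing everything through the Jordan--Wigner representation and invoking the fermionic admissibility results already established. For the achievability (direct) part, fix $R>S_f(\rho)$ and choose $\varepsilon>0$ small enough that $S_f(\rho)+\varepsilon<R$. By Proposition~\ref{thm:typical-subspace}, item~\ref{it:qcard}, for $N$ large the typical subspace $\QTyp_{N,\varepsilon}(\rho)$ has dimension at most $2^{N(S_f(\rho)+\varepsilon)}\leq 2^{NR}$, so we may take the target system $\rM_\rF$ to consist of $M=\ceil{N(S_f(\rho)+\varepsilon)}$ LFMs, i.e.\ $\cF_M$ of dimension $d_{\cF_M}(N)=2^M$, with $\log_2 d_{\cF_M}(N)/N\to S_f(\rho)+\varepsilon$; shrinking $\varepsilon$ and enlarging $N_0$ gives rate exactly $R$ in the limit. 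The encoding channel $\tE_N$ is the Schumacher map built from the projector $P_{N,\varepsilon}(\rho)$ onto $\QTyp_{N,\varepsilon}(\rho)$ together with a fixed ``garbage'' state supported in $\cF_M$; by Lemma~\ref{lem:fermionic-projector}, $P_{N,\varepsilon}(\rho)$ is a legitimate fermionic Kraus operator (a sum of even polynomials in the fields), and the remaining Kraus operators of $\tE_N$ that move the orthogonal complement to a fixed pure state can likewise be chosen of definite parity, so $\tE_N$ is admissible by Proposition~\ref{prop:fermionickraus}; similarly $\tD_N$, the isometric embedding of $\cF_M$ back into $\cF_L^{\boxtimes N}$, is admissible. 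Then $\tC_N:=\tD_N\tE_N$ has $J(\tC_N)$ equal to the standard quantum Schumacher compression map on $\cH_L^{\otimes N}$, so by Eq.~\eqref{EntFididentity} together with item~1 of Proposition~\ref{thm:typical-subspace},
\begin{equation*}
F(\rho^{\boxtimes N},\tC_N)\geq \bigl(\Tr[P_{N,\varepsilon}(\rho)\,J(\rho)^{\otimes N}]\bigr)^2\geq (1-\delta/2)^2\geq 1-\delta
\end{equation*}
for $N$ large, as in the quantum case. By the Corollary on reliable compression schemes this also yields $\varepsilon'$-reliability with $\varepsilon'=2\sqrt\delta$.

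For the converse, fix $R<S_f(\rho)$ and any compression scheme $\{\tE_N,\tD_N\}$ of rate $R$, so the support of $\tE_N(\rho^{\boxtimes N})$ lies in a Fock space $\cF_M$ with $2^M\leq 2^{NR}$ (at least in the limit; one absorbs a vanishing correction into $N_0$). Since $\tD_N$ is a channel, its Jordan--Wigner representative $J(\tD_N)$ has a Kraus decomposition $\{D_k\}$, and likewise $J(\tE_N)=\sum_j E_j\cdot E_j^\dagger$. Writing $C_{kj}:=D_kE_j$, formula~\eqref{EntFididentity} gives $F(\rho^{\boxtimes N},\tC_N)=\sum_{k,j}|\Tr[J(\rho^{\boxtimes N})C_{kj}]|^2$. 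Each $E_j$ factors through the low-dimensional space $\cH_M\subseteq\cH_L^{\otimes N}$, so $E_j=E_j\,\Pi$ where $\Pi$ is an (at most) $2^{NR}$-dimensional projector on $\cH_L^{\otimes N}$ (namely the projector onto the range of $E_j^\dagger$; if the $E_j$ do not share a common such projector one takes $\Pi$ to be the projector onto the span, which still has dimension at most $2^{NR}$ provided $\tE_N$ maps into $\cF_M$). Then $C_{kj}=C_{kj}\Pi$, and the standard quantum estimate — Cauchy--Schwarz in the form $|\Tr[J(\rho^{\boxtimes N})C_{kj}]|^2\le \Tr[J(\rho^{\boxtimes N})\Pi]\cdot\Tr[J(\rho^{\boxtimes N})C_{kj}C_{kj}^\dagger]$ together with $\sum_{k,j}C_{kj}C_{kj}^\dagger\le I$ — yields $F(\rho^{\boxtimes N},\tC_N)\le \Tr[\Pi\,J(\rho)^{\otimes N}]$. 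Applying item~\ref{it:nontriv} of Proposition~\ref{thm:typical-subspace} with $S_N=\Pi$ (an orthogonal projector of rank $<2^{NR}$, $R<S_f(\rho)$), we get $\Tr[\Pi\,J(\rho)^{\otimes N}]\le\delta$ for all $N\geq N_0$, which is the desired bound.

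The main obstacle is not any single estimate — all of them descend from their quantum prototypes once we pass to Jordan--Wigner representatives — but rather the bookkeeping needed to guarantee that the encoding and decoding maps constructed in Schumacher's scheme are \emph{admissible fermionic channels} and not merely quantum channels on $\cH_L^{\otimes N}$: one must check that the projector, the embedding, and the garbage-injection Kraus operators all have definite parity, so that Proposition~\ref{prop:fermionickraus} applies, and one must confirm that, with a suitable ordering of the LFMs, $J(\tC_N\boxtimes\tI)=J(\tC_N)\otimes\tI$, so that the entanglement-fidelity identity~\eqref{EntFididentity} is legitimately applicable. Lemma~\ref{lem:fermionic-projector} handles the first of these, and the discussion following Proposition~\ref{prop:fermionickraus} on the ordering of modes handles the second; once these are in place the theorem reduces cleanly to the three properties of the typical subspace in Proposition~\ref{thm:typical-subspace}.
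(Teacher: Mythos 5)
Your direct part follows the paper's route exactly: Schumacher's constructive scheme pushed through the Jordan--Wigner map, with Lemma~\ref{lem:fermionic-projector} and Proposition~\ref{prop:fermionickraus} guaranteeing admissibility of the encoding/decoding channels, and the bound $F\geq|\Tr[P_{N,\varepsilon}(\rho)J(\rho)^{\otimes N}]|^2$ obtained from Eq.~\eqref{EntFididentity} via the surviving Kraus operator $V^\dagger VP_{N,\varepsilon}(\rho)=P_{N,\varepsilon}(\rho)$. That half is correct and is essentially what the paper does.

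The converse, however, contains a genuine gap as written. First, your fallback of taking $\Pi$ to be the projector onto the span of the ranges of all the $E_j^\dagger$ does not work: each individual $E_j$ has rank at most $\dim\cH_M\leq 2^{NR}$, but the number of Kraus operators is unbounded by $2^{NR}$ and their input supports can jointly span all of $\cH_L^{\otimes N}$ (e.g.\ for a replacement channel $E_j=\ket{S}\bra{j}$), so the common $\Pi$ can have full rank and item~\ref{it:nontriv} gives nothing. Second, even with per-index projectors on the \emph{input} side, your Cauchy--Schwarz produces the factor $\Tr[J(\rho)^{\otimes N}C_{kj}C_{kj}^\dagger]$, and the normalization you invoke, $\sum_{k,j}C_{kj}C_{kj}^\dagger\leq I$, is unitality, not trace preservation; it fails for general channels, so the sum over $(k,j)$ is not controlled. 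The standard fix (and the one implicit in the quantum proof the paper cites) is to place the projector on the \emph{output} side: since $C_{kj}=D_k\Pi_M E_j$ with $\Pi_M$ the projector onto $\cH_M$, each $C_{kj}$ has rank at most $2^{NR}$, so $C_{kj}=S_{kj}C_{kj}$ with $S_{kj}$ the rank-$\leq 2^{NR}$ projector onto the range of $D_k\Pi_M$. Then Cauchy--Schwarz in the form $|\Tr[J(\rho)^{\otimes N}S_{kj}C_{kj}]|^2\leq\Tr[S_{kj}J(\rho)^{\otimes N}]\cdot\Tr[C_{kj}J(\rho)^{\otimes N}C_{kj}^\dagger]$ lets you bound the first factor uniformly by $\delta$ via item~\ref{it:nontriv} and sum the second factor to $1$ by trace preservation. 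With that repair your argument coincides with the quantum converse the paper defers to; no fermionic admissibility is needed there, as you correctly observe, since the estimate lives entirely at the level of Jordan--Wigner representatives.
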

The proof follows exactly the lines of the original proof for standard quantum compression, that can be found 
e.g.~in Ref.~\cite{QInielsenchuang}. As the direct proof is constructive, we only need to take care of the
legitimacy of the compression protocol as a fermionic map. To this end, we recapitulate the construction here.

%  Let $R>S(\rho)$. Take $\varepsilon>0$ such that
%  $R\geq S(\rho)+\varepsilon$. Then, by the typical subspace theorem
%  one has that for every $\delta>0$ there exists $N_0$ such that for
%  every $N\geq N_0$. Now, consider the following compression scheme.
%
  \begin{enumerate}[leftmargin=*]
  \item Encoding: Perform the measurement
    $\{P_{N,\varepsilon}(\rho), I-P_{N,\varepsilon}(\rho)\}$. If the
    outcome corresponding to $P_{N,\varepsilon}(\rho)$ occurs, then
    leave the state unchanged. Otherwise, if the outcome corresponding
    to $I-P_{N,\varepsilon}(\rho)$ occurs, replace the state by a
    standard state $\proj{S}$, with
    $\ket{S}\in\QTyp_{N,\varepsilon}(\rho)$.  Such an map is
    described by the channel $\tM_N:\rL_\rF^{\boxtimes N}\to\rL^{\boxtimes N}_\rF$ given by
\begin{align*}
  &J(\tM_N)(\sigma)\coloneqq\\
  &\quad P_{N,\varepsilon}(\rho)\sigma P_{N,\varepsilon}(\rho)+\Tr[(I-P_{N,\varepsilon}(\rho))\sigma]\ket{S}\bra{S}
\end{align*}
Notice that this is a well defined transformation since by
Lemma~\ref{lem:fermionic-projector} the projector on the typical
subspace is a legitimate fermionic Kraus operator. The second term is a
measure and prepare channel, which is also a legitimate fermionic
transformation. Then consider a system $\rM_\rF$ made of
$M:=N\ceil{R}$ LFMs and the (partial) isometric embedding
$V:\QTyp_{N,\varepsilon}(\rho)\rightarrow\cH_{N\ceil{R}}$ such that
$V^{\dagger}V=I_{\QTyp_{N,\varepsilon}(\rho)}$. Since the first stage 
of the protocol never produces states in the complement of $\QTyp_{N,\varepsilon}(\rho)$,
we can complete the map $V\cdot V^\dag$ to a fermionic channel $\tV_N$. The encoding is then given by the composite map $\tE_N:=\tV_N\tM_N$.

\item Decoding: For the decoding channel, we simply choose the co-isometry $V^{\dagger}$, which inverts $V$
on $\QTyp_{N,\varepsilon}(\rho)$.
\end{enumerate}
As for the converse statement, the proof for quantum compression is based on item~\ref{it:nontriv}, which 
we proved for fermionic theory as well. Thus, the quantum proof applies to the fermionic case.

\section{Discussion}

We have studied information compression for fermionic systems, showing
the fermionic counterpart of the quantum source coding theorem. In
spite of parity superselection rule and the non locality of the Jordan-Wigner
representation of fermionic operators, the von Neumann entropy of
fermionic states can still be interpreted as their information content,
providing the minimal rate for which a reliable compression is
achievable.

The novelty in this paper is the analysis of compression in the
absence of local tomography. Here, the properties of a map, and in the
specific case of study of the compression map, cannot be accessed
locally. This poses stronger constraints on the set of reliable
compression maps. 

Despite the significant differences between fermionic and quantum
information~\cite{DAriano2014}, the source coding theorem holds also
for fermions. We can now wonder which are the minimal features of a
theory that lie behind the coding theorem. As we learn from classical,
quantum, and now also fermionic information theory, the task of
information compression is intimately related to the notion of
entropy. However, it is known that information theories beyond quantum
exhibit inequivalent notions of
entropy~\cite{KIMURA2010175,Barnum_2010,Short_2010}. This is the main
issue one has to face in order to introduce the notion of information
content in the general case. On one side one has to provide a definition of
information content including a broad class of probabilistic
theories. On the other side one can compare such a notion with the
different notions of entropy, identifying the one that plays the same
role of Shannon entropy in the compression task.

\acknowledgments A.T. acknowledges financial support from the Elvia
and Federico Faggin Foundation through the Silicon Valley Community
Foundation, Grant No. 2020-214365. This work was supported by MIUR
Dipartimenti di Eccellenza 2018-2022 project F11I18000680001.

\bibliographystyle{unsrt}
\bibliography{biblio}

\appendix
\section{Fermionic States}\label{app:appstates}
In a $L$-LFM system, fermionic states in $\st(\rL_{\rF})$ are represented by density matrices on the antisymmetric Fock space $\cF_{L}$
satisfying the parity superselection rule. As such they can be written as combinations of products of field
operators. Indeed, a fermionic state $\rho$ can be split in its even and odd part as follows
\[
\begin{aligned}
\rho=\sum_{e}E_{e}\ket{\Omega}\bra{\Omega}E^{\dagger}_{e}+\sum_{o}O_{o}\ket{\Omega}\bra{\Omega}O^{\dagger}_{o},
\end{aligned}
\]
where $E_{e}$ and $O_{o}$ are linear combinations of products of even and odd number of
field operators respectively. By recalling that $\ket{\Omega}\bra{\Omega}=\prod_{i=1}^{L}\varphi_{i}\varphi_{i}^{\dagger}$ one can easily realize
that $\rho$ can be written as combination of products of even number of field operators. Moreover, by using the CAR, the generic state can be written as follows
\[
\rho=\sum_{\underline{s},\underline{t}}
	\rho_{\underline{s}\underline{t}}\prod_{i=1}^{L}\varphi_{i}^{\dagger s_{i}}\varphi_{i}\varphi_{i}^{\dagger}\varphi_{i}^{t_{i}},
\] 
where $\underline{s},\underline{t}\in\{0,1\}^L$ and $\rho_{\underline{s}\underline{t}}\in \mathbb{C}$.

\section{Technical Lemmas}\label{app:app1}

Here we show two lemmas that are used in the proof of Proposition~\ref{def:purification} in the main text.

As a preliminary notion we define quantum states with definite parity. Let $\cH_L$ be an Hilbert space of $L$-qubits and let $\st{(\cH_L)}$ be the corresponding set of states. 
The vectors of the computational basis 
\begin{align}
\ket{s_1,s_2,\ldots, s_L},\quad s_i=\{0,1\},\quad i=1,\ldots L,
\end{align}
can be divided into even $p=0$ and odd $p=1$ vectors according to their parity $p:=\oplus_{i=1}^L s_i$. 
Denoting by $\cH_L^0$ and $\cH_L^1$, with $\cH_L=\cH_L^0\oplus\cH_L^1$, the spaces generated by even and odd vectors respectively, one says that a state $\rho\in\st{(\cH_L)}$ has definite parity if it is of the form $\rho=\rho_0+\rho_1$, with $\rho_0$ and $\rho_1$ having support on $\cH_L^0$ and $\cH_L^1$ respectively. As a special case, a pure state of definite parity $p$ must have support only on $\cH_L^p$. We can now prove the following lemma.

 \begin{lemma}\label{lem:techlemma0}
Consider a quantum state $\rho\in\st{(\cH_L)}$ and two purifications $\Psi,\Phi\in\st{(\cH_L \cH_M)}$ with definite parity. Then it is alway possible to find a unitary channel $\tU$ that maps states of definite parity into states of definite parity and such that $(\tI\otimes\tU)(\Psi)=\Phi$.
\end{lemma}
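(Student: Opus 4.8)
The plan is to reduce the claim to the standard quantum purification theorem and then carefully track the parity structure of the connecting unitary. First I would recall that, by the uniqueness of purification in ordinary quantum theory, any two purifications $\Psi,\Phi\in\st(\cH_L\cH_M)$ of the same state $\rho$ are related by a unitary $W$ acting on the purifying factor $\cH_M$: $(\tI\otimes\tW)(\Psi)=\Phi$. (If the two purifying spaces had different dimensions one would first pad the smaller one with an ancillary pure state; here we may assume they are the same space $\cH_M$, since the lemma as stated fixes $\cH_M$.) The content of the lemma beyond this is that $W$ can be \emph{chosen} so that it maps definite-parity states to definite-parity states, i.e.\ so that $W$ is block-diagonal (or block-off-diagonal) with respect to the decomposition $\cH_M=\cH_M^0\oplus\cH_M^1$.

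The key step is to exploit the fact that both $\Psi$ and $\Phi$ have definite \emph{total} parity, and that $\rho$ (their common marginal) has definite parity. Decomposing $\ket{\Psi}$ and $\ket{\Phi}$ into their even and odd components on $\cH_L\otimes\cH_M$, write each purifying vector via the Schmidt/operator isomorphism as $\ket{\Psi}=|A\kk$, $\ket{\Phi}=|B\kk$ with $A,B\in\Lin{\cH_M,\cH_L}$ satisfying $AA^\dagger=BB^\dagger=\rho$ (up to the transpose convention of Eq.~\eqref{eq:isom}). The parity grading on $\cH_L$ and $\cH_M$ turns $A$ and $B$ into operators with a definite "parity-degree": because $\Psi$ has definite total parity, $A$ maps $\cH_M^0$ into $\cH_L^p$ and $\cH_M^1$ into $\cH_L^{p\oplus 1}$ for a fixed $p$, and similarly for $B$ and some fixed $p'$. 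From $AA^\dagger=BB^\dagger=\rho$ and the fact that $\rho$ respects the grading on $\cH_L$, a polar-decomposition argument produces a partial isometry $W$ from $\mathrm{supp}(A^\dagger)$ to $\mathrm{supp}(B^\dagger)$ with $B=AW^\dagger$; the grading of $A$ and $B$ forces $W$ to carry $\cH_M^{p\oplus q}$ to $\cH_M^{p'\oplus q}$, i.e.\ $W$ is parity-block-diagonal if $p=p'$ and parity-block-off-diagonal if $p\neq p'$. In either case $W$ (extended to a full unitary on $\cH_M$ by choosing it arbitrarily — but still parity-respecting — on the orthogonal complements, which have equal dimensions in each parity sector because $A$ and $B$ have the same singular values with the same parity labels) maps definite-parity states to definite-parity states, and $(\tI\otimes\tW)(\Psi)=\Phi$ by construction.

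The step I expect to be the main obstacle is the bookkeeping needed to verify that the freedom in the polar decomposition can be used to keep $W$ parity-respecting — in particular, checking that within each parity sector the dimensions of the relevant supports and their complements match up, so that the arbitrary extension of the partial isometry to a unitary can itself be chosen block-diagonal (or block-off-diagonal). This is where one uses that $\rho$ has definite parity: its spectral decomposition splits into an even-supported part $\rho_0$ and an odd-supported part $\rho_1$, and the Schmidt coefficients of $\Psi$ (resp.\ $\Phi$) attached to even vectors of $\cH_M$ are exactly the eigenvalues of $\rho_0$ or of $\rho_1$ depending on $p$ (resp.\ $p'$), so the multiplicity counts in each sector agree between the two purifications. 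Once this matching is in hand, the rest is the routine argument above. I would also remark at the end that this lemma is exactly what is needed in the proof of Proposition~\ref{def:purification}: it guarantees that the unitary $U$ connecting $J(\Psi_\rho)$ and $J(\Phi_\rho)$ can be taken to respect parity superselection, so that $J^{-1}(\tU)$ is an admissible fermionic channel.
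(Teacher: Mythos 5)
Your proposal is correct and follows essentially the same route as the paper's proof: both reduce the claim to the uniqueness of quantum purifications and then observe that definite total parity of $\Psi$ and $\Phi$, together with the parity grading of $\rho$'s eigenbasis on $\cH_L$, forces the purifying-side data to be graded, so that the connecting unitary can be completed sector by sector. The only cosmetic difference is that you phrase this via the operator--vector isomorphism and polar decomposition, whereas the paper writes out the Schmidt decompositions with definite-parity bases and builds $U$ explicitly as a sum of rank-one terms; the sector dimension-matching you flag as the main obstacle is exactly the paper's step of completing the two orthonormal sets to definite-parity bases of $\cH_M$.
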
 
\begin{proof}

%Without loss of generality let us assume that $\Psi$ and $\Phi$ have the same parity. Indeed, if 
% $\Psi\in\cH_{LM}^e$ and $\Phi_{LM}^o$, one can take $\bigotimes_{i=1^{j-1}\sigma^x_j\bigotimes_{i=1}^{j-1}
% 
% $(I\otimes\sigma_j^x$

Let $\ket{\Psi}\in\cH_{LM}^p$ and  $\ket{\Phi}\in\cH_{LM}^q$, for $p,q\in\{0,1\}$. Since the two states are purification of the same state $\rho\in\st{(\cH_L)}$ their Schmidt decomposition can always be taken as follows
\begin{align*}
\ket{\Psi}=\sum_{i} \lambda_i\ket{i}\ket{\Psi_i},\qquad\ket{\Phi}=\sum_{i} \lambda_i\ket{i}\ket{\Phi_i},
\end{align*}
where $\{\ket{i}\}\in\cH_L$ is the same orthonormal set for the two states, while $\{\ket{\Psi_i}\},\{\ket{\Phi_i}\}\in\cH_M$ are two generally different orthonormal sets. Notice that, since $\Psi$ and $\Phi$ are pure states of definte parity, any element in the above orthonormal sets must be a vector of definite parity. Within the set $\{\ket{i}\}=\{\{\ket{i_0}\},\{\ket{i_1}\}\}$ one can separate  even $\{\ket{i_0}\}$ and odd $\{\ket{i_0}\}$ parity vectors, and then write  $\Psi$ and $\Phi$ (respectively of parity $p$ and $q$) as
\begin{align*}
&\ket{\Psi}=\sum_{i_0} \lambda_{i_0}\ket{i_0}\ket{\Psi^p_{i_0}}+\sum_{i_1}\lambda_{i_1}\ket{i_1}\ket{\Psi^{\bar{p}}_{i_1}},\\
&\ket{\Phi}=\sum_{i_0} \lambda_{i_0}\ket{i_0}\ket{\Phi^q_{i_0}}+\sum_{i_1}\lambda_{i_1}\ket{i_1}\ket{\Phi^{\bar{q}}_{i_1}},
\end{align*}
where $\bar{r}=r\oplus 1$, and in the orthonormal sets $\{\ket{\Psi^p_{i_0}},\ket{\Psi^{\bar{p}}_{i_1}}\}$ and $\{\ket{\Phi^q_{i_0}},\ket{\Phi^{\bar{q}}_{i_1}}\}$ we separated vectors according to their parity. We can now complete the above two sets to orthonormal bases in such a way that all vectors in both bases have definite parity. Let us take for example the basis 
$\{\ket{\Psi^p_{i_0}},\ket{\Psi^{\bar{p}}_{i_1}}\},| \Psi_k^{r(k)}\rangle\}$ and $\{\ket{\Phi^1_{i_0}},\ket{\Phi^{\bar{q}}_{i_1}}, |\Phi_k^{t(k)}\rangle \}$ with $r(k),t(k)\in\{0,1\}$. It is now straightforward to see that the unitary map $\tU$ having Kraus operator
\begin{align*}
U=
\sum_{i_0}\ket{\Psi^p_{i_0}}\bra{\Phi^q_{i_0}}+\sum_{i_1}\ket{\Psi^{\bar{p}}_{i_1}}\bra{\Phi^{\bar{q}}_{i_1}}+\sum_k |\Psi^{r(k)}_k\rangle\langle\Phi^{t(k)}_k|
\end{align*} 
is such that $(I\otimes U)\ket{\Psi}=\ket{\Phi}$. Moreover $\tU$ maps states of definite parity into states of definite parity.
\end{proof}

\begin{lemma}\label{lem:techlemma}
Let $\rN_{\rF}:=\rL_{\rF}\rK_{\rF}$ and $\tC\in\tr(\rN_{\rF}\rightarrow \rN_{\rF})$ be a single Kraus transformation with
Kraus $C$ having Jordan-Wigner rapresentative $J(C)=U\otimes I_{\rK_{F}}$, $U$ acting on the first $L$ qubits. Then $\tC$ is local on the first $L$ modes.
\end{lemma}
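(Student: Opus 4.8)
The plan is to fix the ordering of the $L+K$ modes so that the $\rL_\rF$ modes come first (indices $1,\dots,L$) and the $\rK_\rF$ modes come second (indices $L+1,\dots,L+K$), and then to exploit the block structure of the Jordan--Wigner map with respect to this bipartition. Reading off the defining formula of $J_{L+K}$, for $i\le L$ one has $J(\varphi_i)=J_L(\varphi_i)\otimes I_{\cH_K}$, while for $j=1,\dots,K$ one has $J(\varphi_{L+j})=Z_L\otimes J_K(\varphi_j)$, where $Z_L:=\bigotimes_{l=1}^{L}\sigma^z_l$ is the Jordan--Wigner image of the parity operator on the $\rL_\rF$ modes. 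I would record at the outset that $Z_L$ is unitary, hence invertible, and that, since $J_L$ and $J_K$ are faithful, the Jordan--Wigner images of the ordered monomials in the $\rL_\rF$, resp.\ $\rK_\rF$, field operators are linearly independent and thus form bases of $\Lin{\cH_L}$, resp.\ $\Lin{\cH_K}$.

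Next I would decompose the Kraus operator. Expanding $C$ in the basis of ordered monomials of the $L+K$ field operators, with the $\rL_\rF$ fields written to the left of the $\rK_\rF$ fields, and grouping terms by their $\rK_\rF$ part gives $C=\sum_\beta C_\beta B_\beta$, where $\{B_\beta\}$ runs over the ordered monomials in the $\rK_\rF$ field operators, $B_{I}=I$ is the empty one, $d_\beta$ is the number of field operators in $B_\beta$, and each $C_\beta$ is a polynomial in the $\rL_\rF$ field operators only. Applying $J$ and using multiplicativity together with the two formulas above, each $B_\beta$ produces a factor $Z_L^{\,d_\beta}$, which equals $I_{\cH_L}$ or $Z_L$ according to the parity of $d_\beta$, so that
\begin{equation*}
J(C)=\sum_\beta\bigl(J_L(C_\beta)\,Z_L^{\,d_\beta\bmod 2}\bigr)\otimes J_K(B_\beta).
\end{equation*}

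Now I would match this against the hypothesis $J(C)=U\otimes I_{\cH_K}=U\otimes J_K(B_{I})$. Since $\{J_K(B_\beta)\}$ is a basis of $\Lin{\cH_K}$, the decomposition of any operator in $\Lin{\cH_L}\otimes\Lin{\cH_K}$ along the second tensor factor is unique, so comparing coefficients yields $J_L(C_\beta)\,Z_L^{\,d_\beta\bmod 2}=0$ for every $\beta\neq I$ and $J_L(C_{I})=U$ (as $d_{I}=0$). Invertibility of $Z_L$ then forces $J_L(C_\beta)=0$, and faithfulness of $J_L$ forces $C_\beta=0$, for all $\beta\neq I$. Hence $C=C_{I}$ is a polynomial in the $\rL_\rF$ field operators alone; it has definite parity because $C$ does (Proposition~\ref{prop:fermionickraus}), so by the same proposition $\tC=\tC'\boxtimes\tI_{\rK_\rF}$, where $\tC'\in\tr(\rL_\rF\to\rL_\rF)$ is the single-Kraus map with Kraus operator $C_{I}$. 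This is exactly the statement that $\tC$ is local on the first $L$ modes.

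The only delicate point is the bookkeeping of the Jordan--Wigner strings across the $\rL_\rF$/$\rK_\rF$ cut: one must check that $J(\varphi_{L+j})$ factorizes exactly as $Z_L\otimes J_K(\varphi_j)$ with no stray sign, and consequently that a product of $\rK_\rF$ field operators contributes on the $\rL_\rF$ factor only a power of the self-inverse operator $Z_L$. Everything else is linear algebra: the existence of the decomposition $C=\sum_\beta C_\beta B_\beta$ (ordered monomials are a basis of the CAR algebra, and $\rK_\rF$ fields can be moved to the right of $\rL_\rF$ fields up to signs absorbed into the $C_\beta$) and its uniqueness at the level of Jordan--Wigner images ($J_L$, $J_K$ are algebra isomorphisms onto $\Lin{\cH_L}$, $\Lin{\cH_K}$).
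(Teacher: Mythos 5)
Your proof is correct, and it rests on the same core fact as the paper's: expanding the Kraus operator in the monomial basis of the CAR algebra and exploiting the block structure of the Jordan--Wigner map across the $\rL_\rF/\rK_\rF$ cut. The organization differs, though, in a way worth noting. The paper argues by contradiction: it writes $C=\sum_i C_i$ as a sum of linearly independent monomials, picks a term with a nontrivial factor on some $\rK_\rF$ mode, locates the largest-label such mode, and observes that the Jordan--Wigner image of that term is not of the form $U\otimes I$. As written, that leaves implicit why the full sum $\sum_i J(C_i)$ could not still equal $U\otimes I$ through cancellation among terms that individually act nontrivially on the $\rK_\rF$ qubits. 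Your version closes exactly that step: by grouping terms according to their $\rK_\rF$ monomial, showing $J(C_\beta B_\beta)=\bigl(J_L(C_\beta)Z_L^{\,d_\beta\bmod 2}\bigr)\otimes J_K(B_\beta)$, and invoking uniqueness of the expansion of $U\otimes I_{\cH_K}$ along the basis $\{J_K(B_\beta)\}$ of the second tensor factor, you get $J_L(C_\beta)Z_L^{\,d_\beta\bmod 2}=0$ for every nonempty $B_\beta$, and invertibility of $Z_L$ plus faithfulness of $J_L$ then kill each coefficient separately. The bookkeeping of the parity string, $J(\varphi_{L+j})=Z_L\otimes J_K(\varphi_j)$ with $Z_L$ self-inverse, is also handled correctly. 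The one point to state explicitly (you do gesture at it) is that $C$ has definite parity by Proposition~\ref{prop:fermionickraus}, so that the surviving polynomial $C_I$ in the $\rL_\rF$ fields is itself an admissible Kraus for a map on $\rL_\rF$, which is what licenses writing $\tC=\tC'\boxtimes\tI_{\rK_\rF}$.
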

\begin{proof}
Due to Proposition~\ref{prop:fermionickraus}, the Kraus operator of $\tC$
can be written as $C=\sum_{i}C_{i}$, where either each $C_i$ is a product of an even number of field
operators, or each $C_i$ is a product of an odd one. 
The set $\{C_{i}\}$ can be taken to be linearly independent without 
loss of generality. 
Let us assume by contradiction that $\tC$ is not local on the first $L$ modes. Therefore, since a set of 
independent operators generating the algebra of the $j$-th mode is $\{\varphi_{j},\varphi_{j}^\dag,\varphi_{j}^\dag\varphi_{j},\varphi_{j}^\dag\varphi_{j}+\varphi_{j}\varphi_{j}^\dag\}$, 
there exists at least  
one product $C_i$ that contains one of the factors $\varphi_{j}$, $\varphi_{j}^{\dagger}$, or $\varphi_{j}\varphi_{j}^{\dagger}$,
for some mode $j$ of the system $\rK_\rF$. 
%Given the subset of the $K$ modes containing one of those terms, 
Let
$j(i)$ be the mode with largest label in the chosen ordering of the $N=L+K$ modes, such that the corresponding 
factor in the product $C_i$ is not the identity (i.e. $\varphi_{j}^\dag\varphi_{j}+\varphi_{j}\varphi_{j}^\dag$). 
Accordingly, one has that the Jordan-Wigner representative of $C_i$ is of the form
\begin{align}\nonumber
&J(C_i)=K\otimes O_{j(i)} \otimes \left(\bigotimes_{l=j(i)+1}^{N} I_l\right),
\end{align}
where $K$ is an operator on the first $1,\ldots,{j(i)-1}$ qubits, and $O_{j(i)}$ is one of the factors 
$\sigma_{j(i)}^{+},\sigma_{j(i)}^{-}, \sigma_{j(i)}^{+}\sigma_{j(i)}^{-}$ on the $j$-th qubit.
This  contradicts the hypothesis on the form of $J(C)$.
\end{proof}

\section{Jordan-Wigner independence}\label{app:well-defined-operations}
In this appendix we show the consistency of definitions \ref{def:squareroot}, \ref{def:ent_fid}, 
\ref{def:logarithm} and \ref{def:fermionic_entropy} given in text. In particular,
we check that they are independent of the particular choice of the order of the fermionic modes, which defines
the Jordan-Wigner transform. We remember that all Jordan-Wigner representations are unitarily equivalent.

\begin{lemma}
Let $\rho$ be a fermionic state. The square root and the logarithm of $\rho$ are well defined.
\end{lemma}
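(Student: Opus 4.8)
The plan is to show that both operations defined through the Jordan-Wigner transform, $\rho^{1/2}=J^{-1}[J(\rho)^{1/2}]$ and $\log_2\rho=J^{-1}[\log_2 J(\rho)]$, yield the same fermionic operator no matter which ordering of the $L$ modes is used to build $J$. I would first recall that any two Jordan-Wigner representations $J$ and $J'$ are related by a fixed unitary conjugation, $J'(X)=W J(X) W^\dagger$ for all polynomials $X$ in the field operators, where $W$ is the unitary implementing the change of mode ordering (a product of the swaps realizing the relevant permutation of modes together with the attendant sign/phase operators). This is the key structural input: $W$ is independent of $\rho$, and conjugation by a unitary commutes with any continuous functional calculus.

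The main step is then purely operator-theoretic. For the square root: $J'(\rho)=W J(\rho) W^\dagger$, and since $J(\rho)$ is a positive semidefinite operator, $\bigl(W J(\rho) W^\dagger\bigr)^{1/2}=W\,J(\rho)^{1/2}\,W^\dagger$ because $W J(\rho)^{1/2} W^\dagger$ is positive and squares to $W J(\rho) W^\dagger$, and the positive square root is unique. Hence $J'^{-1}[J'(\rho)^{1/2}]=J'^{-1}[W J(\rho)^{1/2} W^\dagger]=J^{-1}[J(\rho)^{1/2}]$, using $J'^{-1}(\,W\cdot W^\dagger\,)=J^{-1}(\cdot)$. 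The identical argument works for $\log_2$: on the support of $J(\rho)$ (where the logarithm is defined) one has $\log_2\bigl(W J(\rho) W^\dagger\bigr)=W\,\log_2 J(\rho)\,W^\dagger$ by the same unitary-invariance of functional calculus, so $J'^{-1}[\log_2 J'(\rho)]=J^{-1}[\log_2 J(\rho)]$. Therefore $\rho^{1/2}$ and $\log_2\rho$ are well defined as fermionic operators.

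The one point requiring a little care — and the closest thing to an obstacle — is justifying that the relation between two Jordan-Wigner representations really is a global unitary conjugation by an $\rho$-independent $W$ that moreover preserves the parity grading (so that $J^{-1}$ of the result is again a legitimate fermionic operator); this is exactly the "unitary equivalence of all Jordan-Wigner representations" already asserted in the main text, so I would simply invoke it, noting that $\rho^{1/2}$ and $\log_2\rho$ inherit the even parity of $\rho$ because functional calculus applied to an even operator (block-diagonal in the parity decomposition) stays even. Everything else is the uniqueness of the positive square root and unitary covariance of the Borel functional calculus, which are standard. I would close by remarking that the same computation, applied with $J(\rho^{1/2}\sigma\rho^{1/2})^{1/2}$ inside a trace, gives the Jordan-Wigner independence of the Uhlmann fidelity and hence of the entanglement fidelity and the von Neumann entropy $S_f(\rho)=-\Tr[J(\rho)\log_2 J(\rho)]$, since the trace is conjugation-invariant; this disposes of Definitions~\ref{def:ent_fid} and \ref{def:fermionic_entropy} as well.
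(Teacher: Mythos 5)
Your proposal is correct and follows essentially the same route as the paper: both arguments rest on the unitary equivalence $\tilde J(\cdot)=UJ(\cdot)U^\dagger$ of Jordan--Wigner representations together with the uniqueness of the positive square root and the unitary covariance of the functional calculus defining $\log_2$. The only cosmetic difference is that the paper verifies $J(X)^2=J(\rho)$ by conjugating back with $U^\dagger$ and then invokes uniqueness, whereas you compute $(UJ(\rho)U^\dagger)^{1/2}=UJ(\rho)^{1/2}U^\dagger$ directly and pull it back through $\tilde J^{-1}$; your closing remark on the fidelity and entropy matches the paper's subsequent proposition.
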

\begin{proof}
Once we have fixed the ordering of the modes, the square root of a fermionic state $\rho$ is defined via its Jordan-Wigner representative as follows 
\[
\rho^{\frac{1}{2}}:= J^{-1}[J(\rho)^{\frac{1}{2}}]
\]
If $\tilde{J}$ is the Jordan-Wigner isomorphism associated to a different ordering,
then consider $X:= \tilde{J}^{-1}[\tilde{J}(\rho)^{\frac{1}{2}}]$. We can now prove that  
$X=\rho^\frac{1}{2}$ and then independence of the square root from  the ordering. Indeed, one has 
\[
\tilde{J}(X)^{2}=\tilde{J}(\rho)=UJ(\rho)U^{\dagger},
\]
with $U$ unitary. It follows that
\[
J(\rho)=U^{\dagger}\tilde{J}(X)UU^{\dagger}\tilde{J}(X)U=J(X)^{2}\implies J(X)=J(\rho)^{\frac{1}{2}}.
\]
Since $J$ is an isomorphism, by taking $J^{-1}$ we finally get
\[
X=J^{-1}[J(\rho^{\frac{1}{2}})]=\rho^{\frac{1}{2}}.
\] 

Analogously, the logarithm of a fermionic state is defined thorugh its Jordan-Wigner representative
\[
\log_{2}(\rho):=J^{-1}[\log_{2}(J(\rho))]
\]
Again, let $\tilde{J}$ be the Jordan-Wigner isomorphism corresponding to a different ordering,
and let $X=\tilde{J}^{-1}[\log_{2}(\tilde{J}(\rho))]$. Firstly we notice that
\[
\log_{2}[\tilde{J}(\rho)]=\log_{2}[UJ(\rho)U^{\dagger}]=U\log_{2}[J(\rho)]U^{\dagger}
\]
since $U$ is unitary (we remind that the logarithm of a positive operator is defined via its spectral decomposition, and a unitary map preserves the spectrum). Therefore, we find
\[
J(X)=\log_{2}[J(\rho)]\implies X=J^{-1}[\log_{2}(J(\rho))]=\log_{2}(\rho). 
\]
that concludes the proof.
\end{proof}

Based on the above lemma we have the following proposition.
\begin{proposition}
Let $\rho$ and $\sigma$ be two fermionic states. The Uhlmann fidelity $F(\rho,\sigma)$ and 
the von Neumann entropy $S_{f}(\rho)$ of definitions \ref{def:ent_fid} and \ref{def:fermionic_entropy} are well defined. 
\end{proposition}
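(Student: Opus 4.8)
The plan is to reduce the claim to two facts already in hand. First, by the preceding lemma the fermionic square root $\rho^{\frac12}$ and the logarithm $\log_2\rho$ are themselves independent of the ordering of the modes. Second, if $J$ and $\tilde{J}$ are the Jordan-Wigner isomorphisms attached to two different orderings, then there is a \emph{single} unitary $U$ on $\cH_L$, depending only on the two orderings and not on the operator, with $\tilde{J}(X)=U\,J(X)\,U^{\dagger}$ for every fermionic operator $X$: indeed $\tilde{J}\circ J^{-1}$ is a $*$-automorphism of the full matrix algebra $\Lin{\cH_L}$ (both maps being $*$-isomorphisms onto all of $\Lin{\cH_L}$), hence inner. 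In particular the same $U$ conjugates $\tilde{J}(\rho)=U\,J(\rho)\,U^{\dagger}$ and $\tilde{J}(\sigma)=U\,J(\sigma)\,U^{\dagger}$ simultaneously, which is exactly what is needed to compare fidelities computed in different representations.

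For the von Neumann entropy I would simply evaluate Definition~\ref{def:fermionic_entropy} in the $\tilde{J}$ representation and use $\log_2(UAU^{\dagger})=U(\log_2 A)U^{\dagger}$ (already noted in the preceding lemma) together with cyclicity of the trace:
\[
-\Tr[\tilde{J}(\rho)\log_2\tilde{J}(\rho)]=-\Tr\!\big[UJ(\rho)U^{\dagger}\,U(\log_2 J(\rho))U^{\dagger}\big]=-\Tr[J(\rho)\log_2 J(\rho)].
\]
Hence $S_f(\rho)$ does not depend on the ordering.

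For the Uhlmann fidelity, first observe that $\rho^{\frac12}\sigma\rho^{\frac12}$ is a product of fermionic operators, so $J(\rho^{\frac12}\sigma\rho^{\frac12})=J(\rho)^{\frac12}J(\sigma)J(\rho)^{\frac12}$, and therefore $F(\rho,\sigma)$ of Definition~\ref{def:ent_fid} is literally the quantum Uhlmann fidelity $\Tr\!\big[(J(\rho)^{\frac12}J(\sigma)J(\rho)^{\frac12})^{\frac12}\big]$ of the two representatives. Passing to $\tilde{J}$ replaces $J(\rho)$ and $J(\sigma)$ by $UJ(\rho)U^{\dagger}$ and $UJ(\sigma)U^{\dagger}$ with the \emph{same} $U$; applying $(UAU^{\dagger})^{\frac12}=UA^{\frac12}U^{\dagger}$ for $A\geq 0$ twice --- once to pull $U$ through $J(\rho)^{\frac12}$, once through the outer square root --- and using invariance of the trace under conjugation, the value is unchanged. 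Thus $F(\rho,\sigma)$ is well defined; and since the entanglement fidelity of Definition~\ref{def:ent_fid} is built from the Uhlmann fidelity together with a purification (itself unique up to a channel on the purifying system by Proposition~\ref{def:purification}, a freedom already shown not to affect $F(\rho,\tC)$), it too is well defined, as anticipated in the main text.

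The only genuinely delicate point is the assertion that the change of ordering is implemented by one fixed unitary acting uniformly on all operators: this is what lets one conjugate $J(\rho)$ and $J(\sigma)$ by the \emph{same} $U$ inside the fidelity formula. It follows at once from the observation that both Jordan-Wigner maps are $*$-isomorphisms onto the whole of $\Lin{\cH_L}$, so their composite is an inner automorphism. Everything after that is just the spectral-calculus identity $f(UAU^{\dagger})=U f(A)U^{\dagger}$ applied to the square root and to $\log_2$, together with cyclicity of the trace, with essentially no computation involved.
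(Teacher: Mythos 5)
Your proof is correct and follows essentially the same route as the paper's: reduce a change of mode ordering to conjugation by a single unitary, push that unitary through the spectral functions $(\cdot)^{1/2}$ and $\log_2$, and invoke invariance of the trace under conjugation. The only difference is one of detail --- you justify (via the inner-automorphism argument) the fact that one fixed $U$ implements the reordering uniformly on all operators, which the paper simply takes for granted from its earlier remark that all Jordan--Wigner representations are unitarily equivalent --- so the two arguments are the same in substance.
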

\begin{proof} These two quantities are given by a trace of two well defined  operators, as proved in the previous
lemma. Moreover, since a reordering of the modes corresponds to a unitarily change of basis, the trace is 
Jordan-Wigner independent, and so are $F(\rho,\sigma)$ and $S_{f}(\rho)$.

\end{proof}

% \section{}

% In what follows, $\norma{T}_{p}$ will denote the Shatten p-norm of the
% operator $T\in \Lin(\cH_{A}\rightarrow\cH_{B})$ which is defined as:
% \begin{equation*}
% \norma{T}_{p}={\rm Tr}[(T^{\dagger}T)^{p/2}]^{1/p},
% \end{equation*}
% and the $\infty$-shatten norm corresponds to the sup norm of $T$ (see for instance \cite{bhatia97})
% \begin{equation*}
% \norma{T}_{\infty}=\sup \{ T\phi : \phi \in \cH, \norma{\phi}\leq 1\}.
% \end{equation*}
% A relevant inequality is~\cite{QIwatrous}
% \begin{equation}
% \norma{STR}_{p} \leq \norma{S}_{\infty} \norma{T}_{p} \norma{R}_{\infty},
% \label{eq:normprop}
% \end{equation}
% where $S$, $T$, $R$ are linear operators defined on suitable Hilbert spaces.
% The following theorem on the equivalence between the trace-norm (Shatten 1-norm) and the Uhlmann's distances is also used
% \cite{QInielsenchuang}

\end{document}